\newtheorem{theorem}{Theorem}
\newtheorem{lemma}{Lemma}
\newtheorem{remark}{Remark}
\newtheorem{claim}{Claim}
\newtheorem{definition}{Definition}
\newenvironment{customthm}[1]{\innercustomthm}{\endinnercustomthm}
\begin{document}

\setboolean{extend_v}{true} 
\setboolean{editor}{true} 

\title{On Pseudolinear Codes for Correcting Adversarial Errors \\
\thanks{This work was supported in part by the U.S National Science Foundation under Grants CCF-1908308, CNS-2128448 and CNS-2212565, and by the Office of Naval Research under grant ONR N000142112472.}
}

\author{\IEEEauthorblockN{Eric Ruzomberka\IEEEauthorrefmark{1}, Homa Nikbakht\IEEEauthorrefmark{1}, Christopher G. Brinton\IEEEauthorrefmark{2}, and H. Vincent Poor\IEEEauthorrefmark{1}}
\IEEEauthorblockA{ \IEEEauthorrefmark{1}\textit{Princeton University \IEEEauthorrefmark{2}Purdue University }}
}

\maketitle

\begin{abstract}

We consider error-correction coding schemes for \textit{adversarial wiretap channels} (AWTCs) in which the channel can a) read a fraction of the codeword bits up to a bound $r$ and b) flip a fraction of the bits up to a bound $p$. The channel can freely choose the locations of the bit reads and bit flips via a process with unbounded computational power. Codes for the AWTC are of broad interest in the area of information security, as they can provide data resiliency in settings where an attacker has limited access to a storage or transmission medium.

We investigate a family of non-linear codes known as \textit{pseudolinear codes}, which were first proposed by Guruswami and Indyk (FOCS 2001) for constructing list-decodable codes independent of the AWTC setting. Unlike general non-linear codes, pseudolinear codes admit efficient encoders and have succinct representations. We focus on unique decoding and show that random pseudolinear codes can achieve rates up to the binary symmetric channel (BSC) capacity $1-H_2(p)$ for any $p,r$ in the \textit{less noisy region}: $p<1/2$ and $r<1-H_2(p)$ where $H_2(\cdot)$ is the binary entropy function. Thus, pseudolinear codes are the first known optimal-rate binary code family for the less noisy AWTC that admit efficient encoders. The above result can be viewed as a derandomization result of random general codes in the AWTC setting, which in turn opens new avenues for applying derandomization techniques to randomized constructions of AWTC codes. Our proof applies a novel concentration inequality for sums of random variables with limited independence which may be of interest as an analysis tool more generally.

\end{abstract}



  \section{Introduction}

  A central problem in coding theory asks how to construct error-correction codes that can correct a large number of errors. Naturally, the solution to this problem is sensitive to the underlying noise model, i.e., the assumptions made about the noise process generating the errors. In this article, we study the problem under an \textit{adversarial} noise model in which the noise process is controlled by a malicious agent or adversary who seeks to generate uncorrectable errors subject to a bound $p \in (0,1)$ on the fraction of codeword symbols in error. We are interested in solutions for unique decoding and where the resulting code constructions are amenable to both efficient encoding and efficient decoding algorithms.

  We study a version of the problem that is of broad interest within the area of information security. Here, the adversary has limited access to the codeword based on access restrictions of the storage or transmission medium. Consider the binary adversarial wiretap channel ($\text{AWTC}_{p,r}$) where the adversary can read a fraction $r \in [0,1]$ of the codeword bits before flipping a fraction $p\in(0,1/2)$ of the bits.\footnote{The $\text{ATWC}_{p,r}$ is known in the literature by other names, including the AWTC type II \cite{Ozarow1986,Aggarwal2009,Wang2016_2} and the limited view adversarial channel \cite{Wang2015}. The myopic adversarial model \cite{Sarwate2010a,Dey2019a} is a general framework for modeling adversaries with partial knowledge of the transmitted codeword, which includes the $\text{ATWC}_{p,r}$ as a special case.} The fraction $r$ parameterizes the adversary's knowledge of the codeword where the special cases $r=0$ and $r=1$ correspond to an adversary that is oblivious and omniscient to the codeword, respectively. We emphasize that no assumptions are made about the adversary's computational complexity.
  
   Codes for the $\text{AWTC}_{p,r}$ are particularly attractive for applications in information security due to their potential for achieving high rates without the need for the complexity assumptions. In fact, finding efficient constructions for AWTC-like settings has been an important open problem in wireless security, as such constructions would enable novel applications like coding-based resilient and confidential communication and authentication \cite{Trappe2015}.  Surprisingly, rates up to the binary symmetric channel (BSC) capacity $1-H_2(p)$ are known to be achievable (by inefficient codes) when the channel from Alice and Bob is \textit{less noisy} than the channel from Alice to the adversary, i.e., when $p \in (0,1/2)$ and $r < 1 -H_2(p)$ where $H_2(\cdot)$ is the binary entropy function \cite{Wang2016,Dey2019a}. However, optimal rate constructions with efficient encoding and decoding algorithms are not known for the less noisy $\text{AWTC}_{p,r}$, except for the special case where $r=0$ \cite{Guruswami2016}.\footnote{We focus on the binary alphabet setting. For larger alphabets, some efficient constructions are known (e.g., see the paper of Wang and Safavi-Naini \cite{Wang2015}).} Moreover, as far as the authors are aware, optimal rate constructions that admit efficient encoders remain unknown. 

   In this article, we investigate a family of codes known as \textit{pseduolinear codes} for error-correction\footnote{In information security, it is often desirable for code construction to satisfy both reliability and secrecy constraints. In this paper, due to space limitations, we address only the reliability question (i.e. error-correction) since it is the harder of the two problems. However, our constructions can be shown to simultaneously achieve both reliability and strong secrecy constraints by combining techniques presented in this paper and those of prior work on secrecy (e.g., \cite{Goldfeld2016}).} on the $\text{AWTC}_{p,r}$. Pseudolinear codes (defined shortly) were first proposed by Guruswami and Indyk \cite{Guruswami2001_2} for list-decoding in the omniscient $\text{AWTC}_{p,r=1}$ setting. Since their introduction, these codes have been studied in a list-decoding setting \cite{Guruswami2002_2,Guruswami2001} but have not been considered for unique decoding in the less noisy $\text{AWTC}_{p,r}$ setting. Pseudolinear codes have a number of nice properties which make them attractive candidates for $\text{AWTC}_{p,r}$ codes, including efficient encodings as well as non-linear properties. We remark that non-linearity is a necessary condition for optimal rate codes for the less noisy $\text{AWTC}_{p,r}$, as linear codes are known to achieve rates strictly less than BSC capacity on the $\text{AWTC}_{p,r}$ \cite{Langberg2008ObliviousCapacity}.\footnote{Such a strict separation between the performance of linear codes and general (linear and non-linear) codes is rare in coding theory. Another coding problem that has this separation is the problem of list-decoding erasures \cite{Guruswami2002_2}. Pseudolinear codes were initially proposed for the problem of list-decoding errors, which at the time of their proposal, addressed a discrepancy between known results on linear codes and general codes for list-decoding errors. Since then, this discrepancy has been resolved \cite{Guruswami2002,Guruswami2010}.}

  \subsection{Pseudolinear Codes} \label{sec:pl_codes}

   For a blocklength $n\geq 1$, rate $R \in (0,1]$ and parameter $k \geq 2$, an $(n,Rn,k)$-pseudolinear code $\mathcal{C} \subseteq \{0,1\}^n$ is defined as follows. Let $H$ be the parity check matrix of some binary linear code of blocklength $2^{Rn}-1$, dimension $2^{Rn}-1-m$ for some $m = O(kRn)$, and minimum distance at least $k+1$. For example, $H$ can be the parity check matrix of a Bose–Chaudhuri–Hocquenghem (BCH) code of design distance $k+1$. The code $\mathcal{C}$ maps a message $u \in \{0,1\}^{Rn}$ to its corresponding codeword in two stages. First, $\mathcal{C}$ performs a non-linear mapping of $u$ to the $u^{\text{th}}$ column of $H$, which we denote as $h^m(u)$.\footnote{To account for the message of all $0$ bits, we define $h^m(0)$ to be the zero vector.} In the second stage, $\mathcal{C}$ performs a linear mapping of $h^m(u)$ to codeword $G h^m(u)$ where $G \in \{0,1\}^{n \times m}$ is the ``generator'' matrix of the code. Hence, the non-linearity is confined to the first stage of the encoding process.

 We highlight three properties of pseudolinear codes which were first presented in \cite{Guruswami2001_2}. First, pseudolinear codes have a succinct representation as $nm = O(k n^2)$ bits are sufficient to describe the generator matrix.\footnote{The description of the parity check matrix $H$ is not included in this accounting since it has an explicit representation (discussed shortly), and thus, $H$ does not need to be stored in memory.} We remark that this number is significantly less than the $n2^{Rn}$ bits required to describe general random codes. Second, encoding is computationally efficient as long as $h^m(u)$ can be efficiently obtained. Indeed, the parity check matrix of a BCH code of blocklength $2^{Rn}-1$, dimension $2^{Rn}-1-m$, and design distance $k+1$ has an explicit representation and can be obtained in time polynomial in $n$ by computing powers of a primitive $(2^{Rn}-1)$th root of unity from the appropriate finite field (see, e.g., \cite{Macwilliams1977}). 
 
 Third, if we consider a \textit{random} pseudolinear code by choosing the generator matrix $G$ at random while fixing the parity check matrix $H$, the codewords of the random code can be shown to be \textit{$k$-wise independent}, i.e., the codewords in every subset of size $k$ are mutually independent. In contrast, codewords of random linear codes are only pairwise (i.e., 2-wise) independent in non-trivial cases. In fact, pseudolinear codes are an extension of one of the standard constructions of Joffe \cite{Joffe1974} of $k$-wise independent binary random variables, which is known to be optimal in terms of its succinct representation up to a constant factor (see, e.g., \cite[Chapter 16]{Alon2008}). The $k$-wise independence property of random pseudolinear codes opens up the possibility of applying the probabilistic method to investigate other properties of these codes in the $\text{AWTC}_{p,r}$ setting. We refer the reader to \cite{Guruswami2001} for further discussion on pseudolinear codes.

  \subsection{Results}

  As our main result, we show that random pseudolinear codes achieve rates arbitrarily close to the BSC capacity for the less noisy $\text{AWTC}_{p,r}$ under the "average error criterion". The average error criterion\footnote{We remark that pseudolinear codes are a \textit{deterministic code} in the sense that the mapping from the message space to the codeword space is deterministic. For deterministic codes, the average error criterion is natural notion of decoding error. Alternatively, one may consider \textit{stochastic codes} in which the above mapping is performed using a random key (private to Alice), and in turn, study the \textit{maximal error criterion} in which decoding may fail over a small subset of keys independent of Alice's message. See \cite{Ahlswede1978} for a comparison of deterministic codes under average error criterion and stochastic codes under maximal error criterion.} allows decoding to fail over a small subset of messages which occurs with some probability that can be made arbitrarily small. The following is an informal version of our main result (Theorem \ref{thm:reliable}). Theorem \ref{thm:reliable} is stated in Section \ref{sec:model}.
  \begin{customthm}{A.}[Informal Version of Theorem \ref{thm:reliable}]
  For $p < 1/2$, $r < 1 - H_2(p)$ and any $\epsilon \in (0,1-H_2(p)-r)$, let rate $R = 1 - H_2(p)- \epsilon$ and let $k$ be at least $\Omega\left(\frac{1+H_2(p)+r + (1/n)\log_2 (1/\delta)}{\epsilon} \right)$. Then with probability at least $1-2^{-\Omega(n)}$ over the design of the generator matrix $G$, a random $(n,Rn,k)$ pseudolinear code $\mathcal{C}$ allows reliable communication over the $\text{AWTC}_{p,r}$ with probability of decoding error $\delta>0$. 
  \end{customthm}
   In Theorem A, we remark that the condition that $k$ be sufficiently large with respect to $(1+H_2(p)+r + (1/n) \log_2(1/\delta))/\epsilon$ is a technical requirement of our proof which is needed to exploit the $k$-wise independence property of $\mathcal{C}$.\footnote{More precisely, it is sufficient for $k$ to be any integer such that $\lfloor k/2 \rfloor > \frac{1+H_2(p)+r + (1/n) \log_2(1/\delta)}{\epsilon/2}$ (c.f. Theorem \ref{thm:reliable}).} It is an open question whether this bound on $k$ is necessary in any sense.

   At the heart of the proof of Theorem A (Theorem \ref{thm:reliable}) is a novel concentration result for sums of random variables with limited independence. The goal of the proof is to bound the probability of decoding error, which is achieved by bounding the number of codewords in the pseudolinear code $\mathcal{C}$ which violate a certain "confusability" condition that can result in a decoding error. While bounding this number, we encounter a notion of limited independence that is strictly weaker than $k$-wise independence. We term this limited independence as \textit{$k$-wise independence over every forest ($k$-wise IOEF)} due to an underlying relationship to acyclic graphs. In Section \ref{sec:conc}, we define $k$-wise IOEF and present a novel concentration inequality for sums of $k$-wise IOEF random variables which is based on the moment method of Schmidt, Siegel, and Srinivasan \cite{Schmidt1995} for sums of $k$-wise independent random variables. An overview and detailed proof of Theorem \ref{thm:reliable} is provided in Section \ref{sec:proof_mr}.

  \subsection{Related Work \& Discussion}

  For the oblivious $\text{AWTC}_{p,r=0}$ setting, general random codes were first shown to achieve BSC capacity by Csisz\'{a}r and Narayan \cite{Csiszar1988TheConstraints} via a complicated but general method-of-types approach. Simpler proofs were later provided in  \cite{Langberg2008ObliviousCapacity,Guruswami2016}. This result was extended to the more general $\text{AWTC}_{p,r}$ setting by Wang \cite{Wang2016} and Dey, Jaggi and Langberg \cite{Dey2019a} under the framework of myopic adversarial channels. We remark that Theorem A can be viewed as a derandomization of known results \cite{Wang2016,Dey2019a} for random general codes on the $\text{AWTC}_{p,r}$.\footnote{The work of \cite{Dey2019a} goes far beyond the $\text{AWTC}_{p,r}$ and provides existential results for a large class of adversarial channel models. For details, see \cite[Table I]{Dey2019a}.}

  Aside from the above existential results, some constructions are known for the $\text{AWTC}_{p,r}$. Guruswami and Smith \cite{Guruswami2016} provide an efficient construction for the oblivious $\text{AWTC}_{p,r=0}$ setting. Their construction provides strong error-correction in the presence of an obvious adversary by obscuring the codeword structure via a random scrambling of the codeword. The decoder undoes the scrambling using control information embedded in the codeword. For the more general less noisy $\text{AWTC}_{p,r}$ setting, one possible construction is to combine a powerful linear code with a non-linear authentication code that can detect additive errors with high probability. This was done in a series of works \cite{Wang2016_2,Safavi-Naini2013,Wang2015} over large alphabets by using modified variants of the algebraic manipulation detection (AMD) codes of Cramer et al. \cite{Cramer2008} as the authentication code in combination with a (Folded) Reed-Solomon Code as an error-correction code. The challenge of extending these ideas to the binary alphabet setting is that no constructions are known for binary codes that can efficiently list-decode errors.

  Lastly, we remark that Theorem A opens new avenues for applying derandomization techniques to random pseudolinear codes to introduce additional useful structure while preserving error-correction power on the $\textit{AWTC}_{p,r}$ with high probability. A recent line of work \cite{Rudra2014,Guruswami2022,Goldberg2023,Ferber2022} has explored derandomization techniques for random linear codes via a random puncturing of known low-rate linear codes. For instance, \cite{Guruswami2022} showed that random puncturings of linear codes with a certain low-bias property are capacity achieving on memoryless additive noise channels. One interesting avenue is to explore whether such derandomization techniques can be adapted to random pseudolinear codes in the $\textit{AWTC}_{p,r}$ setting. Here, one may hope to replace random pseudolinear codes with a derandomized version that is more amendable to tractable decoding algorithms without compromising error-correction power.


  \section{Preliminaries and Formal Results} \label{sec:model}
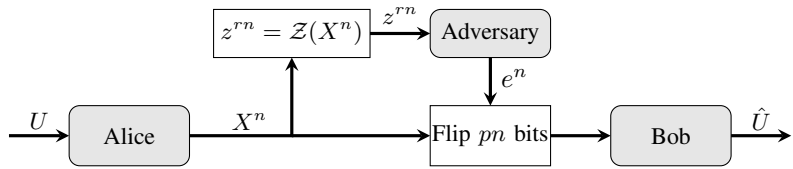
\begin{figure}[t]
  \centering

 
%
  
\begin{tikzpicture}[scale=1.6, >=stealth]
\centering
\footnotesize
\tikzstyle{every node}=[draw,shape=circle, node distance=0.5cm];
 
\draw [ very thick, ->] (-0.5, 0.25)--(0, 0.25);
\node[draw =none] at (-0.25,0.37) {$U$};
\draw [rounded corners, fill = gray!20](0,0) rectangle (1,0.5);
\node[draw =none] at (0.5,0.25) {Alice};
\draw [ very thick, ->] (1, 0.25)--(1.85, 0.25)--(1.85,0.9);
\node[draw =none] at (1.5,0.37) {$X^{n}$};
\draw (1.2,0.9) rectangle (2.5,1.3);
\node[draw =none] at (1.85,1.1) {$z^{rn} = \mathcal{Z}(X^n)$};
\draw [ very thick, ->] (2.5, 1.1)--(3, 1.1);
\node[draw =none] at (2.75,1.25) {$z^{rn}$};
\draw [rounded corners, fill = gray!20] (3,0.9) rectangle (4,1.3);
\node[draw =none] at (3.5,1.1) {Adversary};
\draw [ very thick, ->] (3.5, 0.9)--(3.5, 0.5);
\node[draw =none] at (3.7,0.73) {\small$e^n$};
\draw [ very thick, ->] (1.85, 0.25)--(3, 0.25);
\draw (3,0) rectangle (4,0.5);
\node[draw =none] at (3.5,0.25) {Flip $pn$ bits};
\draw [ very thick, ->] (4, 0.25)--(4.5, 0.25);
\draw [rounded corners, fill = gray!20](4.5,0) rectangle (5.5,0.5);
\node[draw =none] at (5,0.25) {Bob};
\draw [ very thick, ->] (5.5, 0.25)--(6, 0.25);
\node[draw =none] at (5.75,0.39) {$\hat U$};

\end{tikzpicture}
  \caption{The binary adversarial wiretap channel ($\text{AWTC}_{r,p}$). }
   \label{fig:channel_model}
  
  \end{figure}

  \subsection{Notation}
  The Hamming ball of radius $t>0$ centered at a $q$-bit string $x^q \in \{0,1 \}^q$ is the set $\mathcal{B}_{t}(x^q) = \{ y^q \in \{0,1\}^q: d_H\left(y^q,x^q\right) \leq t\}$. Let $\mathds{1}\{\mathcal{H}\}$ denote the indicator of an event $\mathcal{H}$. We let $\mathbb{P}$ and $\mathbb{E}$ denote a probability measure and expectation, respectively, defined w.r.t. some distribution which can either be implied from context or will be stated explicitly. However, when a subscript $\mathcal{C}$ is added as in $\mathbb{P}_{\mathcal{C}}$ and $\mathbb{E}_{\mathcal{C}}$, the distribution is w.r.t. the distribution of a random PL-code $\mathcal{C}$.

  \subsection{Coding Scheme}

  We formally describe our coding scheme for communication over the $\text{AWTC}_{p,r}$. Consider the communication setting of Fig. 
  \ref{fig:channel_model} in which a sender Alice wishes to communicate a message to a receiver Bob over the $\text{AWTC}_{r,p}$. 
  
  \textit{Encoding:} Alice uses a $(n,Rn,k)$ pseudolinear (PL)-code $\mathcal{C}$ as defined in Section \ref{sec:pl_codes}. For a rate $R \in (0,1]$ and blocklength $n \geq 1$, Alice draws a message $U$ uniformly from the message set $\mathcal{U} = \{0,1\}^{Rn}$. She encodes $U$ into an $n$-bit codeword $X^n(U)$ which belongs to code $\mathcal{C}$. Following encoding, Alice transmits $X^n(U)$ over the channel. 
  
  \textit{Decoding:} Bob receives an $n$-bit channel output $X^n(U) \oplus e^n$ where $e^n \in \{0,1\}^n$ is an error string injected by the adversary and the symbol ``$\oplus$'' denotes bitwise XOR. Given the channel output, Bob produces a message estimate $\hat{U} = \phi(X^n(U) \oplus e^n)$ where $\phi$ is the min-distance decoder. We say that a decoding error occurs if $\phi(X^n(U) \oplus e^n) \neq U$.

  \textit{Model of the AWTC:} Let $p \in (0,1/2)$ and $r \in [0,1]$. Before Alice's transmission, the adversary chooses a coordinate set $\mathcal{Z} \in \mathscr{Z}$, where $\mathscr{Z}$ denotes the set of all $rn$-size subsets of $[n]$. Upon transmission, the adversary reads $rn$ bits of Alice's codeword $X^n(U)$ indexed at coordinates $\mathcal{Z}$, which we denote as $\mathcal{Z}(X^n(U))$. In turn, the adversary chooses error string $e^n \in \{0,1\}$ with Hamming weight at most $pn$, i.e., $e^n$ is chosen from the ball $\mathcal{B}_{pn}(0)$. Both choices are made with knowledge of PL-code $\mathcal{C}$.

  In general, the adversary can use \textit{stochastic strategies} for choosing the coordinate set $\mathcal{Z}$ and error string $e^n$. First, the adversary chooses a probability mass function (PMF) $W(\cdot)$ over the domain $\mathscr{Z}$ of coordinate sets, and in turn, draws a coordinate set $\mathcal{Z} \in \mathscr{Z}$ with probability $W(\mathcal{Z})$. Second, the adversary chooses a family of PMFs $\{ W(\cdot|\mathcal{Z},z^{rn}) \}_{\mathcal{Z} \in \mathscr{Z},z^{rn} \in \{0,1\}^{rn}}$ over the domain $\{0,1\}^n$ of error strings, and in turn, upon observing the observation string $z^{rn} \in \{0,1\}^{rn}$, draws the error string $e^n \in \{0,1\}^{n}$ with probability $W(e^n|\mathcal{Z},z^{rn})$. The adversary's bit-flip constraint requires that $W(e^n|\mathcal{Z},z^{rn}) = 0$ for all $e^n \not \in \mathcal{B}_{pn}(0)$. Again, all distributions can depend on PL-code $\mathcal{C}$.

  \textit{Probability of Decoding Error:} Under the above channel model, the probability of decoding error is $P_{\mathrm{error}}(\mathcal{C}) = \mathbb{P}\left( U \neq \phi\left( X^n(U) \oplus e^n \right) \right)$ where $\mathbb{P}$ denotes the probability with respect to (w.r.t) the joint distribution of $U$, $\mathcal{Z}$ and $e^n$. For integer $k\geq 2$, an $(n,Rn,k)$ PL-code $\mathcal{C}$ is said to allow reliable communication over the $\text{AWTC}_{p,r}$ with probability of decoding error $\delta \in (0,1)$ if $P_{\mathrm{error}}(\mathcal{C}) \leq \delta$. 


  \begin{definition}
  Let $F(n,Rn,k)$ be the distribution over $(n,Rn,k)$ PL-codes in which the generator matrix $G$ is chosen uniformly from $\{0,1\}^{n \times m}$ for a fixed parity check matrix $H$.
  \end{definition}

  \begin{definition}[$k$-wise independence]
   A set of random variables $V_1,\ldots,V_M$ are $k$-wise independent for an integer $k \geq 2$ if for any subset $\mathcal{T} \subseteq [M]$ of size $k$ the random variables $\{V_i\}_{i \in \mathcal{\mathcal{T}}}$ are mutually independent.
   \end{definition}

  \begin{lemma} \label{thm:k_wise_cw}
  The codewords of an $(n,Rn,k)$ PL-code $\mathcal{C}$ drawn from the distribution $F(n,Rn,k)$ are uniformly distributed in $\{0,1\}^n$ and $k$-wise independent.
  \end{lemma}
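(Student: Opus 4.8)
The plan is to work directly from the two-stage structure of the encoding map. Recall that a codeword is $X^n(u) = G\,h^m(u)$, where $h^m(u) \in \{0,1\}^m$ is the $u$-th column of the parity check matrix $H$ (with $h^m(0)$ the zero vector), and $G$ is drawn uniformly from $\{0,1\}^{n \times m}$. The first observation I would establish is a purely linear-algebraic one about $H$: for any set of $k$ distinct messages $u_1,\dots,u_k \in \{0,1\}^{Rn}$, the corresponding columns $h^m(u_1),\dots,h^m(u_k)$ of $H$ are linearly independent over $\mathbb{F}_2$. This is exactly the statement that $H$ has minimum distance (of its null space) at least $k+1$ — equivalently, every $k$ columns of $H$ are linearly independent — which is precisely the defining property imposed on $H$ in Section \ref{sec:pl_codes} (e.g., the BCH parity check matrix of design distance $k+1$). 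The one subtlety is the all-zeros message: since $h^m(0) = 0$ is defined to be the zero vector, I need $k$ \emph{distinct nonzero} columns to be independent, which follows because any $k$ (indeed any $k+1$, hence any $k$) columns of $H$ are independent; and $0$ together with any $k-1$ distinct nonzero columns is a dependent set, so I should be slightly careful — but in fact what matters below is just that the map $u \mapsto h^m(u)$ restricted to any $k$ messages has image spanning a space of the right dimension, and when $0$ is among them the $k-1$ nonzero columns are still independent, which is all the argument needs.

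Next I would compute the joint distribution of $(X^n(u_1),\dots,X^n(u_k))$ over the random choice of $G$. Fix target values $x_1,\dots,x_k \in \{0,1\}^n$. Writing $G$ row by row, $G = [g_1^\top; \dots; g_n^\top]$ with the $g_j \in \{0,1\}^m$ i.i.d.\ uniform, the $j$-th coordinate of $X^n(u_i)$ is $\langle g_j, h^m(u_i)\rangle$. So the event $\{X^n(u_i) = x_i \text{ for all } i\}$ factorizes over the $n$ coordinates, and it suffices to show that for each row $g$ (uniform on $\{0,1\}^m$) and each choice of target bits $(b_1,\dots,b_k) \in \{0,1\}^k$, we have $\mathbb{P}\big(\langle g, h^m(u_i)\rangle = b_i \ \forall i\big) = 2^{-k}$. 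This is the standard fact that if $v_1,\dots,v_k$ are linearly independent vectors in $\mathbb{F}_2^m$, then the linear map $g \mapsto (\langle g,v_1\rangle,\dots,\langle g,v_k\rangle)$ is surjective onto $\mathbb{F}_2^k$, hence (for uniform $g$) its output is uniform on $\mathbb{F}_2^k$. Combining over the $n$ rows gives $\mathbb{P}(X^n(u_i)=x_i \ \forall i) = 2^{-kn} = \prod_i 2^{-n}$, which is simultaneously the statement that each $X^n(u_i)$ is uniform on $\{0,1\}^n$ (take $k=1$, or marginalize) and that any $k$ of them are mutually independent. I would also note the degenerate case where one message is $0$: then that codeword is the deterministic zero vector, but $0$ together with the remaining $\le k-1$ nonzero columns still has the remaining columns independent, so the same surjectivity argument shows the \emph{other} $k-1$ codewords are uniform and independent of each other; and the zero codeword trivially carries no randomness, so the joint law is still a product — one factor being a point mass. (If one wishes to avoid this wrinkle entirely, one can simply note that the claim is about $k$-wise independence of a set of $M = 2^{Rn}$ variables and the all-zero message contributes a single fixed point; many treatments exclude it or handle it as above.)

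The main obstacle — to the extent there is one — is verifying the linear-independence-of-columns property cleanly, i.e., making precise the link between "$H$ is the parity check matrix of a code of minimum distance $\ge k+1$" and "every $k$ columns of $H$ are $\mathbb{F}_2$-linearly independent," together with the bookkeeping around the all-zeros message. Everything after that is the routine finite-field linear algebra sketched above (surjectivity of a linear map with independent "coefficient" vectors, and factorization over the $n$ i.i.d.\ rows of $G$). I would structure the writeup as: (i) state the column-independence lemma for $H$ and cite the design-distance guarantee; (ii) the per-row uniformity/surjectivity argument; (iii) tensor up over the $n$ rows to conclude both uniformity of each codeword and $k$-wise independence of any $k$ codewords, thereby proving Lemma \ref{thm:k_wise_cw}.
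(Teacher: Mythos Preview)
Your proposal is correct and follows essentially the same approach as the paper: both hinge on the fact that any $k$ columns of $H$ are linearly independent (from the minimum-distance $\ge k+1$ condition), and then use the uniform randomness of $G$ to conclude. The only cosmetic difference is that the paper phrases the final step via column operations on the matrix equation $GH_{(u_1,\dots,u_k)}=A$ to reduce to i.i.d.\ columns of $G$, whereas you factor over the i.i.d.\ rows of $G$ and invoke surjectivity of $g\mapsto(\langle g,h^m(u_i)\rangle)_i$; these are equivalent presentations of the same linear-algebraic fact, and your treatment of the all-zeros message is in fact more explicit than the paper's.
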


   Proof of Lemma \ref{thm:k_wise_cw} is given in \cite[{Lemma~9.1}]{Guruswami2001}. The proof uses the fact that for any linear code of minimum distance at least $k+1$, any $k$ columns of the parity check matrix $H$ are linearly independent. For completeness, we restate the proof in Appendix \ref{sec:k_wise_cw_proof}. The formal version of Theorem A is as follows.

  \begin{theorem} \label{thm:reliable}
   For $p \in [0,1/2)$, $r \in [0,1-H_2(p))$ and any $\epsilon \in (0,1-H_2(p)-r)$, let the rate $R = 1 - H_2(p) - \epsilon$. With probability at least $$1-\frac{c_k (10n)^{1 + \lfloor k/2 \rfloor} }{\delta}2^{(1+H_2(p)+r-\lfloor k/2 \rfloor \frac{\epsilon}{2})n}$$ over the code distribution $F(n,Rn,k)$ where $c_k$ is some constant that depends only on $k$, an $(n,Rn,k)$ PL-code allows reliable communication over the $\text{AWTC}_{p,r}$ with probability of decoding error $\delta>0$.
  \end{theorem}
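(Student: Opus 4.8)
The plan is to bound $P_{\mathrm{error}}(\mathcal{C})$ for a fixed adversary strategy and then take a union bound over a suitable discretization of adversary strategies, showing that a random PL-code from $F(n,Rn,k)$ fails to be reliable only with the stated probability. First I would reduce to deterministic adversary strategies: since the probability of decoding error is linear in the adversary's PMFs $W(\mathcal{Z})$ and $W(e^n|\mathcal{Z},z^{rn})$, the worst-case stochastic strategy is achieved at a vertex, i.e., a deterministic choice of $\mathcal{Z}$ together with a deterministic map $z^{rn}\mapsto e^n$. Thus it suffices to handle a fixed coordinate set $\mathcal{Z}$ and a fixed function $g:\{0,1\}^{rn}\to\mathcal{B}_{pn}(0)$, and at the end union bound over the at most $\binom{n}{rn}\le 2^n$ choices of $\mathcal{Z}$ and the at most $\bigl(\lvert\mathcal{B}_{pn}(0)\rvert\bigr)^{2^{rn}}\le 2^{2^{rn}\cdot n}$ choices of $g$ — the doubly-exponential count in $g$ is affordable because, as in \cite{Wang2016,Dey2019a}, the per-strategy failure probability will be doubly-exponentially small once we have a strong enough concentration bound; more carefully one discretizes so that the exponent $2^{rn}$ gets absorbed into the $r$ term of the final exponent.

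Next I would set up the confusability event. A min-distance decoding error under error string $e^n$ and transmitted message $u$ requires the existence of some $u'\ne u$ with $d_H(X^n(u'),\,X^n(u)\oplus e^n)\le d_H(X^n(u),\,X^n(u)\oplus e^n)\le pn$, hence $X^n(u')\in\mathcal{B}_{pn}(X^n(u)\oplus e^n)$. Because the adversary only sees $z^{rn}=\mathcal{Z}(X^n(u))$, the relevant quantity is, for each realization $z^{rn}$ of the read bits, the number of codewords landing in the decoding ball; concretely I would define, for each message $u$, the ``bad'' indicator that there is a competing codeword in $\mathcal{B}_{pn}(X^n(u)\oplus g(z^{rn}))$, and bound $P_{\mathrm{error}}(\mathcal{C})$ by $\tfrac{1}{|\mathcal{U}|}$ times the number of bad messages. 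Markov's inequality on this count (over the code distribution) is what produces the $1/\delta$ factor in the theorem, so the real work is to bound the expected number of bad messages, equivalently to show that for most pairs $(u, \text{read pattern})$ the expected count of codewords in a fixed Hamming ball of radius $pn$ is much smaller than $1$, and that large deviations above this are rare.

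The key step, and the main obstacle, is the concentration bound. For a fixed center $y^n$ (depending on $u$'s read bits and $g$) I must control $N(y^n)=\sum_{u'\ne u}\mathds{1}\{X^n(u')\in\mathcal{B}_{pn}(y^n)\}$. Its mean is $(|\mathcal{U}|-1)\cdot 2^{-n}|\mathcal{B}_{pn}(0)|\approx 2^{Rn}2^{-n}2^{H_2(p)n}=2^{-\epsilon n}$ by Lemma \ref{thm:k_wise_cw} (uniform marginals). The summands are $k$-wise independent, but — crucially — the event I actually need to avoid couples the center $y^n$ to the codeword $X^n(u)$, and since the centers across different $u$ are not independent, the natural object is not a clean $k$-wise-independent sum but the weaker notion the introduction calls $k$-wise IOEF. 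I would invoke the novel concentration inequality for $k$-wise IOEF sums promised in Section \ref{sec:conc} (a moment-method bound in the spirit of Schmidt–Siegel–Srinivasan \cite{Schmidt1995}): raising $N(y^n)$ to the $\lfloor k/2\rfloor$-th moment and expanding, each term corresponds to a multiset of $\le\lfloor k/2\rfloor$ messages, the independence structure on acyclic sub-configurations lets one factor the expectation, and one gets a tail bound roughly of the form $\mathbb{P}(N(y^n)\ge 1)\le c_k\,(\text{poly }n)\cdot\bigl(\mathbb{E}N(y^n)\bigr)^{\lfloor k/2\rfloor}\le c_k(\text{poly }n)\,2^{-\lfloor k/2\rfloor\epsilon n}$. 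Plugging in the discretized strategy count $2^{(1+H_2(p)+r)n}$ (up to the poly-$n$ discretization overhead, which accounts for the $(10n)^{1+\lfloor k/2\rfloor}$ factor), applying the union bound over strategies, then Markov over messages with the $1/\delta$ slack, gives exactly the failure probability $c_k(10n)^{1+\lfloor k/2\rfloor}\delta^{-1}2^{(1+H_2(p)+r-\lfloor k/2\rfloor\epsilon/2)n}$, which is $2^{-\Omega(n)}$ once $\lfloor k/2\rfloor>\tfrac{1+H_2(p)+r+(1/n)\log_2(1/\delta)}{\epsilon/2}$. I expect essentially all the difficulty to be concentrated in correctly identifying the IOEF structure of the moment expansion and in the bookkeeping that keeps the discretization overhead down to polynomial in $n$; the type-counting and union-bound steps are routine once the tail bound is in hand.
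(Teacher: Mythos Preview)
Your plan has two genuine gaps that would prevent the argument from closing.

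First, the union bound over deterministic adversary maps $g:\{0,1\}^{rn}\to\mathcal{B}_{pn}(0)$ cannot be salvaged with $k$-wise independent codewords. In \cite{Wang2016,Dey2019a} this works because fully random codes give Chernoff-type (doubly-exponentially small) per-strategy failure; with only $k$-wise independence the best tail you can hope for is $2^{-\Theta(kn)}$, which is singly exponential and cannot beat $2^{2^{rn}n}$ choices of $g$. Your hedge ``more carefully one discretizes'' is not a plan. The paper avoids this entirely: it proves a deterministic sufficient condition (Lemma~\ref{thm:suff_cond_1}) that $|\mathcal{A}(e^n)\cap\mathcal{O}(\mathcal{Z},z^{rn})|\le\delta\,2^{(R-r)n}$ for \emph{every triple} $(\mathcal{Z},z^{rn},e^n)$ forces $P_{\mathrm{error}}(\mathcal{C})\le\delta$, so the outer union bound is only over $|\mathscr{Z}|\cdot 2^{rn}\cdot|\mathcal{B}_{pn}(0)|\le 2^{(1+r+H_2(p))n}$ triples---that is where the $2^{(1+H_2(p)+r)n}$ factor actually comes from.

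Second, and more seriously, you have not dealt with the fact that the index set you are summing over is itself random. The ``bad message'' count is $\sum_{u\in\mathcal{O}(\mathcal{Z},z^{rn})}\mathds{1}\{\exists u'\colon X^n(u')\in\mathcal{B}_{pn}(X^n(u)\oplus e^n)\}$, and membership in $\mathcal{O}(\mathcal{Z},z^{rn})$ depends on the code. Naively conditioning on which messages lie in $\mathcal{O}$ destroys $k$-wise independence (the paper flags exactly this failure mode). The paper's fix is specific to pseudolinear structure and is the step you are missing: partition the generator matrix as $\mathcal{Z}(G),\mathcal{Z}^c(G)$ and condition on the event $\{\mathcal{Z}(G)=A\}$. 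Because $\mathcal{O}(\mathcal{Z},z^{rn})$ is determined by $\mathcal{Z}(G)$ alone while the rows $\mathcal{Z}^c(G)$ remain fresh and independent, this simultaneously \emph{freezes} $\mathcal{O}$ and \emph{preserves} $k$-wise independence of the partial codewords $\{\mathcal{Z}^c(X^n(u))\}_{u\in\mathcal{U}}$ (Lemma~\ref{thm:E_comp_2}). Only after this conditioning can one write the confusability count as a sum over a deterministic bipartite index set and extract the $\lfloor k/2\rfloor$-wise IOEF structure.

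Two smaller corrections. The $(10n)^{1+\lfloor k/2\rfloor}$ factor is not ``discretization overhead'': it arises because the paper decomposes the confusability sum via $S=10n$ auxiliary subsets $\mathcal{S}_1,\dots,\mathcal{S}_S$ of $\mathcal{U}$ (Lemma~\ref{thm:S_set_constr}) so that in each piece $V_i$ the indices $(u,u')$ live in disjoint vertex parts $\mathcal{O}\cap\mathcal{S}_i$ and $\mathcal{S}_i^c$---this is exactly what creates the bipartite forest picture needed for IOEF. And the $1/\delta$ in the failure probability does not come from Markov over messages; it comes from setting the threshold to $\delta\,2^{(R-r)n}$ in the Lemma~\ref{thm:conc_ineq_2} tail bound (see Lemma~\ref{thm:V_conc}).
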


  Theorem \ref{thm:reliable} has the following asymptotic interpretation as $n$ tends to $\infty$: For any choice of parameters $p,r,\epsilon,R$ that satisfy the condition of Theorem \ref{thm:reliable} and any fixed $k$ that satisfies the condition $\lfloor k/2 \rfloor (\epsilon/2) > 1 + H_2(p)+r$, an $(n,Rn,k)$ PL-code picked at random will with high probability allow reliable communication over the $\text{AWTC}_{p,r}$ with probability of decoding error $o(1)$.

   \subsection{Concentration Inequalities} \label{sec:conc}

   We present a few concentration inequalities that will be used in the proof of Theorem \ref{thm:reliable}. These results apply to sums of random variables with limited independence. We first restate a concentration inequality of Schmidt, Siegel, and Srinivasan \cite{Schmidt1995} for sums of $k$-wise independent random variables.

  \begin{lemma}[{\cite{Schmidt1995}}] \label{thm:conc_ineq}
  Suppose that $V_1,\ldots, V_M$ are $k$-wise independent binary random variables that take values in $\{0,1\}$ and define $V \triangleq \sum_{i=1}^M V_i$ and $\mu \triangleq \mathbb{E}[V]$. For any $\gamma > 0$,
  \begin{equation} \nonumber
  \mathbb{P}\left( V > \mu(1+\gamma) \right)  \leq \frac{{M \choose k} (\frac{\mu}{M})^k}{{\mu(1+\gamma) \choose k}}.
  \end{equation}

  \end{lemma}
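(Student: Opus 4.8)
The plan is to use the \emph{binomial (factorial) moment method}: $k$-wise independence is precisely the amount of independence needed to evaluate the $k$-th binomial moment of $V$ in closed form. Write $p_i \triangleq \mathbb{E}[V_i] = \mathbb{P}(V_i=1)$, so that $\sum_{i=1}^M p_i = \mu$, and consider the random variable $\binom{V}{k}$. Since the $V_i$ are $\{0,1\}$-valued, $\binom{V}{k}$ equals the number of $k$-subsets $S \in \binom{[M]}{k}$ for which $V_i = 1$ for every $i \in S$, that is, $\binom{V}{k} = \sum_{S \in \binom{[M]}{k}} \prod_{i \in S} V_i$. Taking expectations and using $k$-wise independence (each such $S$ has size exactly $k$), $\mathbb{E}\big[\prod_{i\in S} V_i\big] = \prod_{i\in S} \mathbb{P}(V_i=1) = \prod_{i\in S} p_i$, and hence
\[
\mathbb{E}\!\left[\binom{V}{k}\right] \;=\; \sum_{S \in \binom{[M]}{k}} \prod_{i\in S} p_i \;=\; e_k(p_1,\ldots,p_M),
\]
the $k$-th elementary symmetric polynomial of $p_1,\ldots,p_M$.

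Next I would bound $e_k$ in terms of the mean alone. By Maclaurin's inequality for elementary symmetric polynomials of nonnegative reals, $\tfrac{1}{\binom{M}{k}}\, e_k(p_1,\ldots,p_M) \le \big(\tfrac{1}{M}\sum_{i} p_i\big)^k$, i.e.\ $e_k(p_1,\ldots,p_M) \le \binom{M}{k}(\mu/M)^k$. This is the step that yields exactly the numerator in the claimed bound (rather than the weaker $\mu^k/k!$), and I would either cite Maclaurin's inequality or include its short derivation. Combining with the previous display, $\mathbb{E}[\binom{V}{k}] \le \binom{M}{k}(\mu/M)^k$.

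Finally I would convert this moment bound into the stated tail bound via Markov's inequality applied to $\binom{V}{k}$. The generalized binomial coefficient $\binom{x}{k} = \tfrac{1}{k!}\,x(x-1)\cdots(x-k+1)$ is nonnegative and nondecreasing in real $x$ on the range $x \ge k-1$; since $V$ is integer-valued, on the event $\{V > \mu(1+\gamma)\}$ we have $V \ge \mu(1+\gamma)$, and therefore $\binom{V}{k} \ge \binom{\mu(1+\gamma)}{k}$ provided $\mu(1+\gamma) \ge k-1$ (which holds in the regime of interest; when it fails the asserted bound is vacuous and can be dispatched separately). Markov's inequality then gives
\[
\mathbb{P}\!\left(V > \mu(1+\gamma)\right) \;\le\; \mathbb{P}\!\left(\binom{V}{k} \ge \binom{\mu(1+\gamma)}{k}\right) \;\le\; \frac{\mathbb{E}[\binom{V}{k}]}{\binom{\mu(1+\gamma)}{k}} \;\le\; \frac{\binom{M}{k}(\mu/M)^k}{\binom{\mu(1+\gamma)}{k}},
\]
which is the claim. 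The only genuinely delicate point is the monotonicity and edge-case bookkeeping in this last paragraph for small $\mu(1+\gamma)$; everything else follows directly from $k$-wise independence together with Maclaurin's inequality. Since the lemma is quoted verbatim from \cite{Schmidt1995}, a pointer to that reference together with the above sketch suffices here; I also expect this same moment-method skeleton to be the template that is later adapted to the $k$-wise IOEF setting.
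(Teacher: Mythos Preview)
Your argument is correct and is exactly the Schmidt--Siegel--Srinivasan moment method the paper relies on: the paper does not reprove Lemma~\ref{thm:conc_ineq} but, in its proof of Lemma~\ref{thm:conc_ineq_2} (Claim~\ref{thm:claim_SSS} plus the subsequent maximization of $g(\{\mu_{i,j}\})$), it carries out precisely the same steps you describe, with the symmetric-function identity $Y_k(\{V_i\})=\binom{V}{k}$ and Markov's inequality yielding the $\binom{\lambda}{k}^{-1}$ factor, and a direct smoothing/exchange argument in place of your appeal to Maclaurin's inequality to obtain the $\binom{M}{k}(\mu/M)^k$ bound. Your closing remark is also on target: this skeleton is exactly what the paper adapts to the $k$-wise IOEF setting.
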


  The next result pertains to sums of random variables with a certain weaker notion of $k$-wise independence, the definition of which requires the following graph theoretic concepts. Let $\mathcal{V}_1$ and $\mathcal{V}_2$ be disjoint sets of integers of size $M_1$ and $M_2$, respectively. Let $\mathcal{G} = (\mathcal{V}_1,\mathcal{V}_2,\mathcal{E})$ denote an undirected bipartite graph with vertex parts $\mathcal{V}_1$ and $\mathcal{V}_2$ and edge set $\mathcal{E}$. By bipartite, we mean that every edge $(i,j) \in \mathcal{E}$ has corresponding vertices $i \in \mathcal{V}_1$ and $j \in \mathcal{V}_2$. An undirected bipartite forest $\mathcal{F}$ is an undirected bipartite graph that is acyclic. We now define a weaker notion of $k$-wise independence for random variables indexed over the product set $\mathcal{V}_1 \times \mathcal{V}_2$.
  
  \begin{definition}[$k$-wise independence over every forest]
  A set of random variables $\{V_{i,j}\}_{(i,j) \in \mathcal{V}_1 \times \mathcal{V}_2}$ is $k$-wise independent over every forest for an integer $k \geq 2$ if for any undirected bipartite forest $\mathcal{F} = (\mathcal{V}_1,\mathcal{V}_2,\mathcal{E})$ with edge set $\mathcal{E}$ of size $k$ the random variables $\{V_{i,j}\}_{(i,j) \in \mathcal{E}}$ are mutually independent.
  \end{definition}

  As far as the authors are aware, the notion of $k$-wise independence over every forest has not been previously investigated. We remark that $k$-wise independence over every forest is related to graphs are used to describe dependencies between random variables, .e.g., [?]. Although strictly weaker than $k$-wise independence, a result similar to Lemma \ref{thm:conc_ineq} holds when $k$-wise independence is relaxed to $k$-wise independence over every forest.

  \begin{lemma} \label{thm:conc_ineq_2}
  Suppose that $\{V_{i,j}\}_{(i,j) \in \mathcal{V}_1 \times \mathcal{V}_2}$ are $k$-wise independent over every forest and binary random variables that take values in $\{0,1\}$, and define the sum $V \triangleq \sum_{i\in \mathcal{V}_1} \sum_{j\in\mathcal{V}_2} V_{i,j}$ and $\mu \triangleq \mathbb{E}[V]$. Furthermore, suppose that $\mu \geq 1$. For any $\gamma > 0$,
  \begin{equation} \nonumber
  \mathbb{P}(V > \mu(1+\gamma)) \leq c_k \frac{{M_1 M_2 \choose {k}} (\frac{\mu}{M_1 M_2})^{k}}{{\mu(1+\gamma) \choose {k}}}
  \end{equation}
  where $c_k$ is a positive constant the depends only on $k$, and where $M_1 = |\mathcal{V}_1|$ and $M_2 = |\mathcal{V}_2|$.
  \end{lemma}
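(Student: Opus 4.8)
The plan is to mimic the moment-method proof of Lemma \ref{thm:conc_ineq} (Schmidt--Siegel--Srinivasan), replacing the use of full $k$-wise independence with the weaker forest independence, and absorbing the resulting slack into the constant $c_k$. Recall the key idea of that method: to bound $\mathbb{P}(V > \mu(1+\gamma))$ one works with the falling-factorial (binomial-coefficient) moment $\mathbb{E}\binom{V}{k}$ rather than the raw moment $\mathbb{E}[V^k]$, since $\binom{V}{k} = \sum_{S} \prod_{(i,j)\in S} V_{i,j}$ where the sum runs over $k$-subsets $S$ of the index set $\mathcal{V}_1\times\mathcal{V}_2$, and for binary $V_{i,j}$ the product $\prod_{(i,j)\in S}V_{i,j}$ is again an indicator. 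Then Markov's inequality applied to $\binom{V}{k}$ gives $\mathbb{P}(V>\mu(1+\gamma)) = \mathbb{P}\bigl(\binom{V}{k} \ge \binom{\lceil\mu(1+\gamma)\rceil}{k}\bigr) \le \mathbb{E}\binom{V}{k}/\binom{\mu(1+\gamma)}{k}$, so everything reduces to upper bounding $\mathbb{E}\binom{V}{k}$.

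First I would partition the $k$-subsets $S \subseteq \mathcal{V}_1\times\mathcal{V}_2$ according to the structure of the bipartite graph they induce on $\mathcal{V}_1 \sqcup \mathcal{V}_2$ (treating each pair $(i,j)\in S$ as an edge). If the induced graph is a forest, then by the $k$-wise IOEF hypothesis the variables $\{V_{i,j}\}_{(i,j)\in S}$ are mutually independent, so $\mathbb{E}\prod_{(i,j)\in S}V_{i,j} = \prod_{(i,j)\in S}\mathbb{E}[V_{i,j}] = \prod (\mu/(M_1M_2))$ — exactly the term that appears in the SSS bound, since there are at most $\binom{M_1M_2}{k}$ such subsets and each contributes $(\mu/(M_1M_2))^k$. (Here I am using that all $V_{i,j}$ have the same mean $\mu/(M_1M_2)$; if the lemma is meant to allow unequal means one replaces the product bound by the AM–GM / symmetric-function estimate used in SSS, but for the intended application in the proof of Theorem \ref{thm:reliable} the $V_{i,j}$ are exchangeable, so I will state it that way.) The second step is to control the contribution of $k$-subsets $S$ whose induced graph contains a cycle. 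The crucial combinatorial observation is that such an $S$, having $k$ edges but at most $k-1$ vertices, is ``deficient'': on a vertex set of size $v \le k-1$ there are at most $\binom{v^2}{k} \le \binom{(k-1)^2}{k}$ edge-sets, and the number of ways to choose the $\le k-1$ vertices from $\mathcal{V}_1\sqcup\mathcal{V}_2$ is at most $\binom{M_1+M_2}{k-1} = O(M_1^{k-1}M_2^{k-1})$ (absorbing $k$-dependent constants). Meanwhile $\mathbb{E}\prod_{(i,j)\in S}V_{i,j} \le \max_{(i,j)\in S}\mathbb{E}[V_{i,j}]\cdot(\text{stuff}\le1)$; more carefully, since a cyclic $S$ still contains a spanning forest on its vertex set with $v-1$ edges which are mutually independent, one gets $\mathbb{E}\prod_{(i,j)\in S}V_{i,j} \le (\mu/(M_1M_2))^{v-1} \le (\mu/(M_1M_2))^{\lceil k/2\rceil}$ using $v-1 \ge k/2$ for any connected component balance — actually one should be a little careful and bound componentwise, but the upshot is that each cyclic term is at most $(\mu/(M_1M_2))^{v-1}$ with $v\le k-1$. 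Summing, the total cyclic contribution is at most $c_k' (M_1M_2)^{k-1}(\mu/(M_1M_2))^{\text{(number of independent edges)}}$, and I would show this is dominated by $c_k''\binom{M_1M_2}{k}(\mu/(M_1M_2))^k$ precisely when $\mu \ge 1$, which is why that hypothesis appears. Combining the forest and cyclic contributions yields $\mathbb{E}\binom{V}{k} \le c_k \binom{M_1M_2}{k}(\mu/(M_1M_2))^k$, and dividing by $\binom{\mu(1+\gamma)}{k}$ finishes the proof.

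The main obstacle I anticipate is making the cyclic-subset bookkeeping clean: one has to verify that the loss from replacing ``$k$ independent factors'' by ``$(v-1)$ independent factors from a spanning forest'' is never worse than a $k$-dependent constant times the forest term, and this comparison is exactly where the $\mu\ge1$ normalization is used (for $\mu<1$ the cheap terms with few vertices could dominate and the bound would fail). A secondary technical point is extracting a spanning forest inside each connected component of a cyclic $S$ so that the IOEF hypothesis can be invoked — this is routine (take a spanning tree of each component, its edges form a bipartite forest of size $\le k-1 < k$, so they are mutually independent and one bounds the remaining edge-indicators by $1$). Everything else — Markov on the binomial moment, the passage $\mathbb{P}(V>\mu(1+\gamma))=\mathbb{P}(\binom{V}{k}\ge\binom{\mu(1+\gamma)}{k})$ using monotonicity of $\binom{\cdot}{k}$ on integers $\ge k-1$, and the final division — is identical to the SSS argument and I would cite \cite{Schmidt1995} for those pieces rather than reprove them.
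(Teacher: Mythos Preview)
Your overall strategy is the paper's: apply the SSS binomial-moment bound $\mathbb{P}(V>\lambda)\le \binom{\lambda}{k}^{-1}\sum_{\mathcal{E}\in\mathcal{P}^{(k)}}\mathbb{E}\prod_{(i,j)\in\mathcal{E}}V_{i,j}$, replace each $\mathcal{E}$ by a spanning forest and bound the dropped indicators by $1$, invoke IOEF on the forest edges, and use $\mu\ge1$ to absorb the lower-order terms into a $k$-dependent constant. Two concrete points in your sketch are wrong, though, and the paper's version fixes both.

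First, your cyclic-subset counting is incorrect as written. A $k$-edge bipartite graph containing a cycle need not have at most $k-1$ non-isolated vertices (a $4$-cycle together with a disjoint edge has $5$ edges and $6$ vertices), and the inequality $v-1\ge k/2$ also fails ($K_{4,4}$ has $16$ edges and only $8$ vertices). The paper organizes the count differently: partition $\mathcal{P}^{(k)}$ by the number $s$ of fundamental cycles of $\mathcal{G}(\mathcal{E})$, so that a fixed maximum spanning forest $\mathcal{F}(\mathcal{E})$ has exactly $k-s$ edges. The key observation is then that for any fixed forest $\mathcal{F}'$ with $k-s$ edges, at most a constant $b_k$ many edge sets $\mathcal{E}\in\mathcal{P}^{(k)}_s$ satisfy $\mathcal{F}(\mathcal{E})=\mathcal{F}'$, because the $s$ extra edges must join non-isolated vertices of $\mathcal{F}'$ and there are at most $2(k-s)\le 2k$ of those. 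This turns the cyclic sum into $b_k\sum_{\mathcal{E}'\in\mathcal{P}^{(k-s)}}\prod_{(i,j)\in\mathcal{E}'}\mu_{i,j}$, summed over $0\le s\le S<k$.

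Second, your claim that the $V_{i,j}$ in the application are exchangeable is false: in the proof of Theorem~\ref{thm:reliable} the conditional mean of $V(u,u',e^n)$ depends on $D(u')$ (see the proof of Lemma~\ref{thm:cond_Vi}), so the means are genuinely unequal and the lemma must be proved in that generality. The paper handles this exactly as SSS do, by maximizing $\sum_{\mathcal{E}'\in\mathcal{P}^{(k-s)}}\prod_{(i,j)\in\mathcal{E}'}\mu_{i,j}$ over $\{\mu_{i,j}\}$ subject to $\sum\mu_{i,j}=\mu$; a simple exchange argument shows the maximum is at $\mu_{i,j}=\mu/(M_1M_2)$, giving $\binom{M_1M_2}{k-s}(\mu/(M_1M_2))^{k-s}=\Theta_k(\mu^{k-s})$. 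Summing over $s$ and using $\mu\ge1$ to bound each $\Theta_k(\mu^{k-s})$ by $\Theta_k(\mu^{k})$ yields the stated inequality.
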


  The proof of Lemma \ref{thm:conc_ineq_2} studies the joint distribution of $\{V_{i,j}\}_{(i,j) \in \mathcal{E}}$ for all subsets $\mathcal{E} \subseteq \mathcal{V}_1 \times \mathcal{V}_2$ of size $k$, where the subset $\mathcal{E}$ can be viewed as an edge set of the undirected bipartite graph $\mathcal{G}(\mathcal{E}) = (\mathcal{V}_1,\mathcal{V}_2,\mathcal{E})$. The proof considers two cases: where $\mathcal{G}(\mathcal{E})$ is acyclic and when $\mathcal{G}(\mathcal{E})$ contains a cycle. In the former case, the random variables $\{V_{i,j}\}_{(i,j) \in \mathcal{E}}$ are independent following the definition of $k$-wise independent over every forest and the approach of Schmidt, Siegel, and Srinivasan \cite{Schmidt1995} can be directly applied. In the latter case, $\{V_{i,j}\}_{(i,j) \in \mathcal{E}}$ are not independent and additional work is required to apply the approach of \cite{Schmidt1995} which involves bounding the number of undirected bipartite graphs that contain at least one cycle. The detailed proof can be found in Appendix \ref{sec:conc_ineq_2_proof}.

  \section{Proof of Theorem \ref{thm:reliable}} \label{sec:proof_mr}

  In this section, we present our proof of Theorem \ref{thm:reliable}. We use the notation and coding scheme described in in Section \ref{sec:model} for communication setting over the $\text{AWTC}_{p,r}$. In the sequel, we fix the fraction $p \in (0,1/2)$ and the fraction $r \in (0,1-H_2(p))$ of Alice's codeword bits that can be read and flipped, respectively, by the $\text{AWTC}_{p,r}$, and set the blocklength $n \geq 1$. For any $\epsilon \in (0,1-H_2(p)-r)$ we set the rate $R = 1 - H_2(p)- \epsilon$. Notice that this choice of $\epsilon$ ensures that $R > r$. In the following, we first provide an overview of the proof followed by a detailed proof.
  

  \subsection{Overview of Proof of Theorem \ref{thm:reliable}}

  Our proof of Theorem \ref{thm:reliable} relates the probability of decoding error $P_{\mathrm{error}}$ to a notion of confusable codewords. Let $e^n$ be an error string in $\{0,1\}^n$ with Hamming weight at most $pn$. We say that a codeword $X^n(u)$ corresponding to a message $u \in \mathcal{U}$ is \textit{confused by the error string $e^n$} if there exists another message $u' \in \mathcal{U} \setminus \{u\}$ such that $X^n(u')$ is contained in the Hamming ball $\mathcal{B}_{pn}(X^n(u) \oplus e^n)$. Let $\mathcal{A}(e^n)$ denote the set of all messages $u \in \mathcal{U}$ such that codeword $X^n(u)$ is confused by $e^n$. Furthermore, for a coordinate set $\mathcal{Z} \in \mathscr{Z}$ and an observation string $z^{rn} \in \{0,1 \}^{rn}$, define the set of messages \textit{consistent with $\mathcal{Z}$ and $z^{rn}$} as $
  \mathcal{O}(\mathcal{Z},z^{rn}) = \left\{ u \in \mathcal{U}: z^n = \mathcal{Z}(X^n(u)) \right\}$. Thus, for a coordinate set $\mathcal{Z}$, an observation of the string $z^{rn}$ informs the adversary that Alice's message belongs to the set $\mathcal{O}(\mathcal{Z},z^{rn})$. Note that both $\mathcal{A}(e^n)$ and $\mathcal{O}(\mathcal{Z},z^{rn})$ depend on Alice's PL-code $\mathcal{C}$, although we have not denoted this explicitly. 
  
  We remark that the intersection of the sets $\mathcal{A}(e^n)$ and $\mathcal{O}(\mathcal{Z},z^{rn})$ indicates all messages which are likely to be Alice's transmitted message (from the adversary's perspective) and which are confused by an error string $e^n$. We show in the following, roughly speaking, that if for an $(n,Rn,k)$ PL-code $\mathcal{C}$ the aforementioned intersection contains a small number of messages for every $e^n \in \mathcal{B}_{pn}(0)$, $\mathcal{Z} \in \mathscr{Z}$ and $z^{rn} \in \{0,1\}^{rn}$, then the probability of decoding error is small.\footnote{We note that the pseudolinear structure of $\mathcal{C}$ is not necessary for this claim.}

  \begin{lemma}[Sufficient Condition 1] \label{thm:suff_cond_1}
  An $(n,Rn,k)$ PL-code $\mathcal{C}$ allows reliable communication over the $\text{AWTC}_{p,r}$ with probability of decoding error at most $\delta>0$ if for every coordinate set $\mathcal{Z} \in \mathscr{Z}$, every observation string $z^{rn} \in \{0,1\}^{rn}$ and every error string $e^n \in \mathcal{B}_{pn}(0)$, $|\mathcal{A}(e^n) \cap \mathcal{O}(\mathcal{Z},z^{rn})| \leq \delta 2^{(R-r)n}.$ We delay the proof until Section \ref{sec:detailedproof}.
  \end{lemma}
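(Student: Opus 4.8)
The plan is to bound $P_{\mathrm{error}}(\mathcal{C})$ by conditioning on the adversary's stochastic strategy and showing that whenever Alice's message $U$ happens to lie outside a small "bad set" of messages, the min-distance decoder $\phi$ recovers $U$ correctly. First I would fix an arbitrary (stochastic) adversary strategy, i.e. a PMF $W(\cdot)$ over coordinate sets and a family $\{W(\cdot\mid\mathcal{Z},z^{rn})\}$ over error strings supported on $\mathcal{B}_{pn}(0)$. Then I would write
\begin{equation} \nonumber
P_{\mathrm{error}}(\mathcal{C}) = \sum_{\mathcal{Z} \in \mathscr{Z}} W(\mathcal{Z}) \sum_{u \in \mathcal{U}} \frac{1}{2^{Rn}} \sum_{e^n \in \mathcal{B}_{pn}(0)} W\big(e^n \mid \mathcal{Z}, \mathcal{Z}(X^n(u))\big)\, \mathds{1}\{\phi(X^n(u)\oplus e^n) \neq u\}.
\end{equation}
The key observation is that the event $\{\phi(X^n(u)\oplus e^n)\neq u\}$ can only occur if there is some $u'\neq u$ with $d_H(X^n(u'), X^n(u)\oplus e^n) \le d_H(X^n(u), X^n(u)\oplus e^n) \le pn$, i.e. $X^n(u')\in\mathcal{B}_{pn}(X^n(u)\oplus e^n)$, which is precisely the statement $u \in \mathcal{A}(e^n)$. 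Hence $\mathds{1}\{\phi(X^n(u)\oplus e^n)\neq u\} \le \mathds{1}\{u \in \mathcal{A}(e^n)\}$.

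Next I would exploit the wiretap constraint: when Alice's message is $u$, the adversary observes only $z^{rn} = \mathcal{Z}(X^n(u))$, and the error PMF it applies is $W(\cdot \mid \mathcal{Z}, \mathcal{Z}(X^n(u)))$. I would reorganize the sum over $u$ by grouping messages according to their observation string: for fixed $\mathcal{Z}$ and fixed $z^{rn}$, the inner contribution is
\begin{equation} \nonumber
\sum_{u \in \mathcal{O}(\mathcal{Z},z^{rn})} \frac{1}{2^{Rn}} \sum_{e^n \in \mathcal{B}_{pn}(0)} W(e^n \mid \mathcal{Z}, z^{rn})\, \mathds{1}\{u \in \mathcal{A}(e^n)\} = \frac{1}{2^{Rn}} \sum_{e^n} W(e^n \mid \mathcal{Z},z^{rn})\, \big|\mathcal{A}(e^n) \cap \mathcal{O}(\mathcal{Z},z^{rn})\big|.
\end{equation}
Now invoke the hypothesis: $|\mathcal{A}(e^n)\cap\mathcal{O}(\mathcal{Z},z^{rn})| \le \delta 2^{(R-r)n}$ for every $\mathcal{Z}, z^{rn}, e^n$. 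Since $\sum_{e^n} W(e^n\mid\mathcal{Z},z^{rn}) = 1$, each such inner term is at most $\delta 2^{(R-r)n}/2^{Rn} = \delta 2^{-rn}$. Summing over the at most $2^{rn}$ distinct observation strings $z^{rn}$ gives at most $\delta$ for each $\mathcal{Z}$, and then summing over $\mathcal{Z}$ with weights $W(\mathcal{Z})$ (which sum to $1$) yields $P_{\mathrm{error}}(\mathcal{C}) \le \delta$, as desired.

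The main point requiring care — and the only real obstacle — is the bookkeeping in the second step: making sure the double sum over $(u, e^n)$ is correctly partitioned so that the error PMF's conditioning argument is $\mathcal{Z}(X^n(u))$ (not an arbitrary $z^{rn}$) and that, after grouping by $z^{rn}$, the number of distinct strings $z^{rn}$ actually realized is at most $2^{rn} = |\{0,1\}^{rn}|$. One should also confirm that ties in the min-distance decoder are handled consistently with the definition of $\mathcal{A}(e^n)$ (the statement "$X^n(u') \in \mathcal{B}_{pn}(X^n(u)\oplus e^n)$" uses a closed ball, so $d_H(X^n(u'),\cdot) \le pn$ suffices to make $u$ potentially decoded incorrectly, which matches the $\le$ in the confusability condition). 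Everything else is a routine interchange of finite sums and an application of $\sum W = 1$; no concentration or probabilistic-method argument is needed here, since this lemma is a purely combinatorial/deterministic statement about a fixed code $\mathcal{C}$.
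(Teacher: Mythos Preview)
Your proposal is correct and follows essentially the same approach as the paper's proof: bound the decoding-error indicator by $\mathds{1}\{u\in\mathcal{A}(e^n)\}$, decompose the expectation according to $(\mathcal{Z},z^{rn},e^n)$, and collapse the resulting sum to $2^{-Rn}\sum W(e^n\mid\mathcal{Z},z^{rn})W(\mathcal{Z})\,|\mathcal{A}(e^n)\cap\mathcal{O}(\mathcal{Z},z^{rn})|$ before applying the hypothesis. The only cosmetic difference is that the paper reaches the decomposition via the conditional-independence identity $W(e^n\mid u)=\sum_{\mathcal{Z},z^{rn}}W(e^n\mid\mathcal{Z},z^{rn})W(\mathcal{Z})\mathds{1}\{u\in\mathcal{O}(\mathcal{Z},z^{rn})\}$, whereas you group messages directly by their realized observation string; the resulting bounds are identical.
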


    Our approach to prove Theorem \ref{thm:reliable} is to use the probabilistic method in which the PL-code $\mathcal{C}$ is randomly drawn from the distribution $F(n,Rn,k)$ and the set size $|\mathcal{A}(e^n) \cap \mathcal{O}(\mathcal{Z},z^{rn})|$ is viewed as a random variable. Using the sufficient condition of Lemma \ref{thm:suff_cond_1}, it is sufficient to show that the set size $|\mathcal{A}(e^n) \cap \mathcal{O}(\mathcal{Z},z^{rn})|$ is sufficiently small ($\leq \delta 2^{(R-r)n}$) for all $e^n \in \mathcal{B}_{pn}(0)$, $\mathcal{Z} \in \mathscr{Z}$ and $z^{rn} \in \{0,1\}^{rn}$ with high probability over the distribution $F(n,Rn,k)$. One can show this directly by showing that $|\mathcal{A}(e^n) \cap \mathcal{O}(\mathcal{Z},z^{rn})|$ has an expectation (w.r.t. to $F(n,Rn,k)$) that is sufficiently smaller than $\delta 2^{(R-r)n}$, and subsequently, showing that $|\mathcal{A}(e^n) \cap \mathcal{O}(\mathcal{Z},z^{rn})|$ is \textit{concentrated above its expectation}, i.e., realizes a value that is not much larger than its expectation with high probability. In turn, if for any $\mathcal{Z} \in \mathscr{Z}$, $z^{rn} \in \{0,1\}^{rn}$ and $e^n \in \mathcal{B}_{pn}(0)$, the set size $|\mathcal{A}(e^n)\cap \mathcal{O}(\mathcal{Z},z^{rn})|$ is concentrated above its expectation with high enough probability, then a simple union bound can be applied to show that $|\mathcal{A}(e^n) \cap \mathcal{O}(\mathcal{Z},z^{rn})|$ is concentrated above its expectation for \textit{all} $\mathcal{Z} \in \mathscr{Z}$, $z^{rn} \in \{0,1 \}^{rn}$ and $e^n \in \mathcal{B}_{pn}(0)$.
  


  Towards the goal of showing that $|\mathcal{A}(e^n) \cap \mathcal{O}(\mathcal{Z},z^{rn})|$ is concentrated above its expectation, we remark that $|\mathcal{A}(e^n) \cap \mathcal{O}(\mathcal{Z},z^{rn})|$ does not have a form that is directly amenable to our Lemma \ref{thm:conc_ineq_2} concentration inequality. To side-step this issue, we study a quantity $V(\mathcal{Z},z^{rn},e^n)$ related to $|\mathcal{A}(e^n) \cap \mathcal{O}(\mathcal{Z},z^{rn})|$ which has the following three properties for every $\mathcal{Z} \in \mathscr{Z}$, $z^{rn} \in \{0,1\}^{rn}$ and $e^n \in \mathcal{B}_{pn}(0)$:
  \begin{enumerate}[label=\textit{Property \arabic*)},itemindent=*]
  \item  The set size $|\mathcal{A}(e^n) \cap \mathcal{O}(\mathcal{Z},z^{rn})|$ is bounded above by $V(\mathcal{Z},z^{rn},e^n)$.
  \item $V(\mathcal{Z},z^{rn},e^n)$ has an expectation (w.r.t. the distribution $F(n,Rn,k)$) that is sufficiently smaller than $\delta 2^{(R-r)n}$. 
  \item $V(\mathcal{Z},z^{rn},e^n)$ has a structure of limited independence which permits application of the concentration inequality of Lemma \ref{thm:conc_ineq_2}. In particular, for some $k' = k'(k)$, $V(\mathcal{Z},z^{rn},e^n)$ is the sum of random variables that are $k'$-wise independent over every forest.
  \end{enumerate}
  Given a quantity $V(\mathcal{Z},z^{rn},e^n)$ with the above three properties for all $\mathcal{Z} \in \mathscr{Z}$, $z^{rn} \in \{0,1\}^{rn}$ and $e^n \in \mathcal{B}_{pn}(0)$, one can apply Lemma \ref{thm:conc_ineq_2} to show that $V(\mathcal{Z},z^{rn},e^n)$ is concentrated above its expectation, and in turn, imply that $|\mathcal{A}(e^n) \cap \mathcal{O}(\mathscr{Z},z^{rn})|$ is concentrated above its expectation.

  The specific quantity $V(\mathcal{Z},z^{rn},e^n)$ we consider is the following. For an integer $S$ large enough such that the following subsets exist, let $\mathcal{S}_1, \ldots, \mathcal{S}_S$ be subsets of the message set $\mathcal{U}$ with the property that for any two unique messages $u,u' \in \mathcal{U}$ there exists an index $i \in [S]$ such that $u \in \mathcal{S}_i$ and $u' \in \mathcal{S}^c_i$. In turn, define the sum
  \begin{equation} \label{eq:V}
  V(\mathcal{Z},z^{rn},e^n) \triangleq \sum_{i=1}^S \sum_{u \in \mathcal{O}(\mathcal{Z},z^{rn}) \cap \mathcal{S}_i} \sum_{u' \in \mathcal{S}^c_i} V(u,u',e^n)
  \end{equation}
  where $V(u,u',e^n) \triangleq \mathds{1} \{ X^n(u') \in \mathcal{B}_{pn}\left( X^n(u) \oplus e^n \right) \}$. We can immediately verify that Property 1 holds for the above choice of $V(\mathcal{Z},z^{rn},e^n)$ via the following inequalities:
  \begin{align}
  &|\mathcal{A}(e^n) \cap \mathcal{O}(\mathcal{Z},z^{rn})| = \sum_{u \in \mathcal{O}(\mathcal{Z},z^{rn})} \hspace{-1em}\mathds{1}\{\exists u' \in \mathcal{U} \setminus \{u\} \text{ s.t. } X^n(u') \in \mathcal{B}_{pn}(X^n(u) \oplus e^n)\} \nonumber \\
  & \leq \sum_{u \in \mathcal{O}(\mathcal{Z},z^{rn})} \sum_{u' \in \mathcal{U} \setminus \{u\}} V(u,u',e^n) \stackrel{(a)}{\leq} V(\mathcal{Z},z^{rn},e^n) \nonumber
  \end{align}
  where $(a)$ follows from the definition of the subsets $\mathcal{S}_1, \ldots, \mathcal{S}_S$. Furthermore, it is straightforward to verify (and will be verified later in the proof) that Property 2 holds for sufficiently small $S$. However, to show that Property 3 holds, additional work is needed. In particular, we must show that for every $\mathcal{Z} \in \mathscr{Z}$, $z^{rn} \in \{0,1\}^{rn}$ and $e^n \in \mathcal{B}_{pn}(0)$ the random variables $\{ V(u,u',e^n) : (u,u') \in \mathcal{O}(\mathcal{Z},z^{rn}) \cap \mathcal{S}_i \times \mathcal{S}^c_i \}$ are $\lfloor k/2  \rfloor$-wise independent over every forest by using the property that the codewords of $\mathcal{C}$ are $k$-wise independent. This is difficult as the structure of the set $\mathcal{O}(\mathcal{Z},z^{rn})$ is random, and thus, the definition of $k$-wise independence over every forest cannot be directly applied. We remark that naively conditioning on the messages in $\mathcal{O}(\mathcal{Z},z^{rn})$ is not a viable approach, as conditioning can invalidate the property of $k$-wise independent codewords.\footnote{We remark that when codewords are fully independent, conditioning on the messages in $\mathcal{O}(\mathcal{Z},z^{rn})$ is a viable approach, as demonstrated in \cite{Dey2019a}.} Thus, the focus of the remainder of the proof is to show Property 3.
  
  In the following overview, we outline the remainder of the proof of Theorem \ref{thm:reliable}.
  \begin{itemize}
  \item For $S = O(n)$, we show that there exists $S$ subsets $\mathcal{S}_1, \ldots, \mathcal{S}_S$ of $\mathcal{U}$ such that for any two unique messages $u,u' \in \mathcal{U}$ there exists an $i \in [S]$ such that $u \in \mathcal{S}_i$ and $u' \in \mathcal{S}_i$ (c.f. Lemma \ref{thm:S_set_constr}). In the sequel, we fix both $S= O(n)$ and the sets $\mathcal{S}_1, \ldots, \mathcal{S}_S$.
  \item Recall that the following two descriptions are equivalent: the PL-code $\mathcal{C}$ is drawn from the distribution $F(n,Rn,k)$ and the generator matrix $G$ of $\mathcal{C}$ is uniformly distributed in $\{0,1\}^{n \times m}$. We introduce a partition of the generator matrix $G$ based on the adversary's coordinates $\mathcal{Z}$. Let $\mathcal{Z}(G)$ denote the binary $rn \times m$ matrix formed by the $rn$ rows of $G$ indexed by the coordinates $\mathcal{Z}$. Similarly, let $\mathcal{Z}^c(G)$ denote the binary $(1-r)n \times m$ matrix formed by the $(1-r)n$ rows of $G$ indexed by the coordinates $\mathcal{Z}^c$. It follows from the uniform distribution of $G$ that $\mathcal{Z}(G)$ and $\mathcal{Z}^c(G)$ are uniformly distributed and independent.
  \item Observe that for any message $u \in \mathcal{U}$, the codeword views $\mathcal{Z}(X^n(u))$ and $\mathcal{Z}^c(X^n(u))$ depend deterministically on $\mathcal{Z}(G)$ and $\mathcal{Z}^c(G)$, respectively, and are independent of $\mathcal{Z}^c(G)$ and $\mathcal{Z}(G)$, respectively. 
  \item Following the above observation, the set $\mathcal{O}(\mathcal{Z},z^{rn})$ depends only on the partition $\mathcal{Z}(G)$ and not on $\mathcal{Z}^c(G)$. Hence, for a fixed $\mathcal{Z} \in \mathscr{Z}$, we can condition on the event $\mathcal{Z}(G) = A$ for any matrix $A \in \{0,1\}^{rn \times m}$ to simultaneously fix the set $\mathcal{O}(\mathcal{Z},z^{rn})$ for all $z^{rn} \in \{0,1\}^{rn}$ while preserving the $k$-wise independence of the codeword views $\{\mathcal{Z}^c(X^n(u))\}_{u \in \mathcal{U}}$ (c.f. Lemma \ref{thm:E_comp_2}). We remark that this step highlights the useful structure of pseudolinear codes beyond their property of $k$-wise independent codewords.
  \item  \begin{definition} \label{def:H}
  \vspace{-1.8em} For a coordinate set $\mathcal{Z} \in \mathscr{Z}$ and some parameter $\theta>0$, define $\mathcal{H}(\mathcal{Z})$ to be the event that there exists some observation string $z^{rn} \in \{0,1\}^{rn}$ such that $|\mathcal{O}(\mathcal{Z},z^{rn})| > 2^{(R-r)n + \theta n}$. 
  \end{definition}
  \item   \begin{lemma} \label{thm:E_comp_1}
  \vspace{-1.8em} For a fixed $\mathcal{Z} \in \mathscr{Z}$, event $\mathcal{H}(\mathcal{Z})$ occurs with probability at most $\alpha_k 2^{-n(k \theta-r)}$ over distribution $F(n,Rn,k)$ where $\alpha_k$ is a constant that depends only on $k$.
  \end{lemma}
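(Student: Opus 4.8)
The plan is to view $|\mathcal{O}(\mathcal{Z},z^{rn})|$, for each fixed observation string $z^{rn}$, as a sum of $k$-wise independent indicator variables, apply the Schmidt--Siegel--Srinivasan tail bound of Lemma~\ref{thm:conc_ineq}, and then take a union bound over the $2^{rn}$ possible values of $z^{rn}$.

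First I would fix $\mathcal{Z}\in\mathscr{Z}$ and $z^{rn}\in\{0,1\}^{rn}$ and write $|\mathcal{O}(\mathcal{Z},z^{rn})| = \sum_{u\in\mathcal{U}} V_u$ where $V_u \triangleq \mathds{1}\{\mathcal{Z}(X^n(u)) = z^{rn}\}$. By Lemma~\ref{thm:k_wise_cw} the codewords $\{X^n(u)\}_{u\in\mathcal{U}}$ are uniform on $\{0,1\}^n$ and $k$-wise independent; applying the fixed coordinate projection $\mathcal{Z}(\cdot)$ to each of any $k$ mutually independent, uniform codewords yields $k$ mutually independent vectors that are each uniform on $\{0,1\}^{rn}$, so the variables $\{V_u\}_{u\in\mathcal{U}}$ are $k$-wise independent, binary, and have common mean $2^{-rn}$. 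Hence, with $M = 2^{Rn}$, $\mu \triangleq \mathbb{E}|\mathcal{O}(\mathcal{Z},z^{rn})| = 2^{Rn}\cdot 2^{-rn} = 2^{(R-r)n}$ (which is at least $1$ since $R>r$), and the event in Definition~\ref{def:H} is exactly $|\mathcal{O}(\mathcal{Z},z^{rn})| > \mu\cdot 2^{\theta n}$.

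Next I would invoke Lemma~\ref{thm:conc_ineq} with $\gamma = 2^{\theta n}-1$, so that $\mu(1+\gamma) = 2^{(R-r)n+\theta n}$, obtaining
\begin{equation}\nonumber
\mathbb{P}\bigl(|\mathcal{O}(\mathcal{Z},z^{rn})| > 2^{(R-r)n+\theta n}\bigr) \;\le\; \frac{\binom{M}{k}(\mu/M)^k}{\binom{\mu(1+\gamma)}{k}} \;\le\; \frac{\mu^k/k!}{\binom{2^{(R-r)n+\theta n}}{k}},
\end{equation}
where I used $\binom{M}{k}(\mu/M)^k \le \mu^k/k!$. Bounding the denominator from below by the elementary inequality $\binom{x}{k}\ge (x/2)^k/k!$, valid once $x\ge 2k$, the right-hand side is at most $2^k\,(\mu/(\mu(1+\gamma)))^k = 2^k\,2^{-k\theta n}$. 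A union bound over the $2^{rn}$ strings $z^{rn}$ then gives $\mathbb{P}(\mathcal{H}(\mathcal{Z})) \le 2^{rn}\cdot 2^k\cdot 2^{-k\theta n} = \alpha_k\,2^{-n(k\theta-r)}$ with $\alpha_k = 2^k$; the finitely many small-$n$ cases in which $2^{(R-r)n+\theta n}<2k$ are absorbed by enlarging $\alpha_k$.

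The steps are largely routine once the reformulation is in place. The only point requiring genuine care — and hence the place I expect the real work to be — is the passage of $k$-wise independence from the codewords to the indicators $V_u$, i.e., the observation that a deterministic coordinate projection applied to $k$ mutually independent codewords leaves them mutually independent (so that Lemma~\ref{thm:conc_ineq} is actually applicable to the $V_u$'s); the remainder is arithmetic with binomial coefficients and a union bound.
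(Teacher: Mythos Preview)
Your proposal is correct and follows exactly the route the paper indicates: the paper does not give a detailed proof but states that the result ``follows standard typicality arguments,'' specifically by applying the Schmidt--Siegel--Srinivasan inequality of Lemma~\ref{thm:conc_ineq} to the $k$-wise independent indicators and then union-bounding over $z^{rn}$, which is precisely what you do. Your identification of the ``care point'' is slightly overstated, since deterministic functions of mutually independent variables are trivially independent, but this does not affect correctness.
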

  The proof of Lemma \ref{thm:E_comp_1} follows standard typicality arguments. For example, one can apply the concentration inequality of Lemma \ref{thm:conc_ineq} for sums of $k$-wise independent random variables.
  \item   \begin{definition}
  \vspace{-1.8em} A matrix $A \in \{0,1\}^{rn \times n}$ is said to be consistent with $\mathcal{H}^c(\mathcal{Z})$ if $\mathbb{P}_{\mathcal{C}}\left( \mathcal{Z}(G)=A \big| \mathcal{H}(\mathcal{Z})\right)>0$ where $\mathbb{P}_{\mathcal{C}}$ denotes the probability w.r.t. $\mathcal{C} \sim F(n,Rn,k)$.
  \end{definition}
  \item \textit{(Property 2)} For any matrix $A$ consistent with $\mathcal{H}(\mathcal{Z})$, conditional on $\mathcal{H}^c(\mathcal{Z})$ and $\mathcal{Z}(G) = A$, the expectation of $V(\mathcal{Z},z^{rn},e^n)$ is smaller than $\delta 2^{(R-r)n}$ by a factor exponential in $n$ for small enough $\theta$ (c.f. Lemma \ref{thm:cond_Vi}).
  \item \textit{(Property 3)} For a matrix $A$ consistent with $\mathcal{H}(\mathcal{Z})$ and for $i \in [S]$, conditional on $\mathcal{H}^c(\mathcal{Z})$ and $\mathcal{Z}(G) = A$ the random variables $\{V(u,u',e^n): (u,u') \in \mathcal{O}(\mathcal{Z},z^{rn}) \cap \mathcal{S}_i \times \mathcal{S}_i^c \}$ are $\lfloor k/2 \rfloor$-wise independent over every forest (c.f. Lemma \ref{thm:Vk_wise_forest}).
  \item We apply the concentration inequality of Lemma \ref{thm:conc_ineq_2}. For large enough $k$ and with probability at most $2^{-k\Omega(n)}$ over the distribution $F(n,Rn,k)$, $V(\mathcal{Z},z^{rn},e^n)$ is bounded above by $\delta 2^{(R-r)n}$ for every $\mathcal{Z} \in \mathscr{Z}$, $z^{rn} \in \{0,1\}^{rn}$ and $e^n \in \mathcal{B}_{pn}(0)$ (c.f. Lemma \ref{thm:suff2} and Lemma \ref{thm:V_conc}).
  
  \end{itemize}

  \subsection{Detailed Proof of Theorem \ref{thm:reliable}} \label{sec:detailedproof}

  We begin by proving Lemma \ref{thm:suff_cond_1}.
  \begin{proof}[Proof of Lemma \ref{thm:suff_cond_1}]
  The proof is similar in spirit to the proof of \cite[Lemma 2.2]{Langberg2008ObliviousCapacity}. Let $\delta > 0$ and suppose that $\mathcal{C}$ is an $(n,Rn,k)$ PL-code such that for every $\mathcal{Z} \in \mathscr{Z}$, every $z^{rn} \in \{0,1\}^{rn}$ and every $e^n \in \mathcal{B}_{pn}(0)$, $|\mathcal{A}(e^n) \cap \mathcal{O}(\mathcal{Z},z^{rn})| \leq \delta 2^{(R-r)n}$.

 Given that Alice uses the PL-code $\mathcal{C}$, the probability of decoding error $P_{\mathrm{error}}(\mathcal{C})$ is by definition equal to the probability that the decoded message $\phi(X^n(U) \oplus e^n)$ is not equal to Alice's message $U$ where $\phi$ is the min-distance decoder. This probability is bounded above by the probability that the message $U$ is confused by $e^n$, i.e., 
  \begin{align} 
   \mathbb{P}\left( U \in \mathcal{A}(e^n) \right)  &= \sum_{u \in \mathcal{U}} \mathbb{P}\left(u \in \mathcal{A}(e^n) | U=u \right) 2^{-Rn} = \sum_{e^n \in \mathcal{B}_{pn}(0)} \sum_{u \in \mathcal{A}(e^n) } W(e^n|u) 2^{-Rn} \label{eq:suff1_1}
  \end{align}
  where $\mathbb{P}$ is the probability w.r.t. the joint distribution of $U$, $\mathcal{Z}$ and $e^n$. Using the fact that $e^n$ is conditionally independent of $U$ given the adversary's observation string $z^{rn}$ and coordinate set $\mathcal{Z}$, $W(e^n|u)$ is equal to $$\sum_{\mathcal{Z} \in \mathscr{Z}} \sum_{z^{rn} \in \{0,1\}^{rn}} W(e^n|\mathcal{Z},z^{rn})W(\mathcal{Z}) \mathds{1}\{ u \in \mathcal{O}(\mathcal{Z},z^{rn}) \},$$ and in turn, (\ref{eq:suff1_1}) is equal to 
  \begin{align}
  & \sum_{e^n \in \mathcal{B}_{pn}(0)} \sum_{\mathcal{Z} \in \mathscr{Z}} \sum_{z^{rn} \in \{0,1\}^{rn}} \sum_{u \in \mathcal{A}(e^n) \cap \mathcal{O}(\mathcal{Z},z^{rn})} W(e^n|z^{rn}) W(\mathcal{Z}) 2^{-Rn}  \nonumber \\
  & = \sum_{e^n \in \mathcal{B}_{pn}(0)} \sum_{\mathcal{Z} \in \mathscr{Z}} \sum_{z^{rn} \in \{0,1\}^{rn}} W(e^n|z^{rn}) W(\mathcal{Z})2^{-Rn}|\mathcal{A}(e^n) \cap \mathcal{O}(\mathcal{Z},z^{rn})|  \stackrel{(a)}{\leq} \delta \nonumber
  \end{align}
  where (a) follows from the supposition that $|\mathcal{A}(e^n) \cap \mathcal{O}(\mathcal{Z},z^{rn})| \leq \delta 2^{(R-r)n}$ for all $\mathcal{Z} \in \mathscr{Z}$, $z^{rn} \in \{0,1\}^{rn}$ and $e^n \in \mathcal{B}_{pn}(0)$.
  \end{proof}

  \begin{lemma} \label{thm:S_set_constr}
  There exists a collection $\{ \mathcal{S}_i \}_{i=1}^{10n}$ of $S = 10 n$ subsets of $\mathcal{U}$ such that for any two unique messages $u,u'$ in $\mathcal{U}$, there exists an index $i^* \in [S]$ such that both $u \in \mathcal{S}_{i^*}$ and $u' \in \mathcal{S}^c_{i^*}$.
  \end{lemma}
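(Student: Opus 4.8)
The plan is to prove the lemma by the probabilistic method. I would construct the collection $\{\mathcal{S}_i\}_{i=1}^{S}$ with $S = 10n$ at random as follows: independently for each index $i \in [S]$ and independently for each message $u \in \mathcal{U}$, place $u$ into $\mathcal{S}_i$ with probability $1/2$. It then suffices to show that with positive probability this random collection has the claimed separation property, which yields the existence of a valid (deterministic) collection.

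To carry this out, fix an ordered pair of distinct messages $(u,u') \in \mathcal{U} \times \mathcal{U}$. For each $i \in [S]$, let $E_i$ be the event that $u \in \mathcal{S}_i$ and $u' \in \mathcal{S}_i^c$; by construction $\mathbb{P}(E_i) = 1/4$, and since the membership choices are independent across $i$, the events $E_1,\dots,E_S$ are mutually independent. Hence the probability that no index works for this pair is $\mathbb{P}\bigl(\bigcap_{i=1}^{S} E_i^c\bigr) = (3/4)^{S}$. Taking a union bound over all at most $|\mathcal{U}|^2 \le 2^{2Rn} \le 2^{2n}$ ordered pairs of distinct messages (using $R \le 1$), the probability that the random collection fails for some pair is at most $2^{2n}(3/4)^{10n} = 2^{2n - 10n\log_2(4/3)}$. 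Since $\log_2(4/3) > 0.41 > 2/5$, we get $10\log_2(4/3) > 4$, so this quantity is at most $2^{-2n} < 1$ for every $n \ge 1$. Therefore some realization of the collection satisfies the requirement, proving the lemma.

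I do not expect a genuine obstacle here; the only point needing care is the numerical constant, and $S = 10n$ leaves a comfortable margin (in fact $S = O(Rn)$ would already suffice). As an alternative to the probabilistic argument, one could instead take any binary constant-weight code of length $10n$ and size $2^{Rn}$ — which exists since $\binom{10n}{5n} \gg 2^{Rn}$ — and set $\mathcal{S}_i$ to be the set of messages whose codeword has a $1$ in coordinate $i$; the constant-weight property guarantees that the supports form an antichain, which is exactly the separation condition. I would most likely present only the probabilistic proof and mention this explicit construction in a remark.
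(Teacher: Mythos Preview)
Your proof is correct and follows essentially the same probabilistic-method argument as the paper: random subsets with independent $1/2$ membership, the $(3/4)^S$ failure probability per pair, and a union bound over at most $2^{2Rn}$ pairs with $S=10n$. The alternative explicit antichain construction you mention is a nice addition but is not in the paper.
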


  \begin{proof}[Proof of Lemma \ref{thm:S_set_constr}]
  We prove Lemma \ref{thm:S_set_constr} via the probabilistic method. Let $S$ be a positive integer and construct the subsets $\mathcal{S}_1, \ldots, \mathcal{S}_S$ randomly and independently. For $i=1,\ldots,S$, subset $\mathcal{S}_i$ is constructed via the following: for every message $u \in \mathcal{U}$, $u$ belongs to $\mathcal{S}_i$ with probability $1/2$, and $u$ belongs to $\mathcal{S}^c_i$ with probability $1/2$.

  We bound the probability over the subset constructions that there exists two unique messages $u,u' \in \mathcal{U}$ such that $u \not\in \mathcal{S}_i$ or $u' \not\in \mathcal{S}_i^c$ for all $i \in [S]$. By a union bound this probability is bounded above by 
  \begin{align}
  &2^{2Rn} \mathbb{P}\left(\cap_{i=1}^S \{u \not\in \mathcal{S}_i \text{ or } u' \not\in \mathcal{S}^c_i\}\right) = 2^{2Rn} (3/4)^{S} \label{eq:S_set_constr_1}
  \end{align}
  where $\mathbb{P}(\cdot)$ denotes the probability over the subset construction and where (\ref{eq:S_set_constr_1}) follows from the property that $\mathcal{S}_1, \ldots, \mathcal{S}_S$ are constructed independently of each other. Setting $S = 10n$, we ensure that (\ref{eq:S_set_constr_1}) is strictly less than $1$ for any $R \leq 1$, and thus, there exists a collection of subsets as described in Lemma \ref{thm:S_set_constr}.
  \end{proof}

  In the sequel, set $S = 10n$ and let $S_1, \ldots, \mathcal{S}_S$ be some subsets identified in Lemma \ref{thm:S_set_constr}. Recall that $V(\mathcal{Z},z^{rn},e^n)$ as defined in (\ref{eq:V}) depends on this choice of sets. We now present a second sufficient condition for the proof of Theorem \ref{thm:reliable} that makes use of the structure of $\mathcal{H}(\mathcal{Z})$.

  \begin{lemma}[Sufficient Condition 2] \label{thm:suff2}
  Suppose that for every coordinate set $\mathcal{Z} \in \mathscr{Z}$, every observation string $z^{rn} \in \{0,1\}^{rn}$, every error string $e^n \in \mathcal{B}_{pn}(0)$ and every matrix $A \in \{0,1\}^{rn \times n}$ consistent with $\mathcal{H}^c(\mathcal{Z})$,
  \begin{equation}
   \mathbb{P}_{\mathcal{C}}\left(V(\mathcal{Z},z^{rn},e^n) > \delta 2^{(R-r)n} \big| \mathcal{H}^c(\mathcal{Z}),\mathcal{Z}(G)=A \right) \leq \beta(n)
  \end{equation}
  for some $\beta(n)>0$. Then with probability at least $1 - 2^{(1+r+H_2(p))n}\beta(n) - \alpha_k 2^{(1+r-k\theta)}$ over the code design, the $(n,Rn,k)$ PL-code $\mathcal{C}$ allows reliable communication over the $\text{AWTC}_{p,r}$ with probability of decoding error $\delta>0$.
  \end{lemma}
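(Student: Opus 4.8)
The plan is to combine three facts already in hand — the sufficient condition of Lemma~\ref{thm:suff_cond_1}, Property~1 (which gives $|\mathcal{A}(e^n)\cap\mathcal{O}(\mathcal{Z},z^{rn})|\le V(\mathcal{Z},z^{rn},e^n)$ for every triple), and the typicality estimate of Lemma~\ref{thm:E_comp_1} — with a union bound over the adversary's choices. I would define the \emph{good event} (over $\mathcal{C}\sim F(n,Rn,k)$) to be the intersection of the event that $\mathcal{H}^c(\mathcal{Z})$ holds for every $\mathcal{Z}\in\mathscr{Z}$ with the event that $V(\mathcal{Z},z^{rn},e^n)\le\delta 2^{(R-r)n}$ for every $\mathcal{Z}\in\mathscr{Z}$, $z^{rn}\in\{0,1\}^{rn}$ and $e^n\in\mathcal{B}_{pn}(0)$. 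On the good event, Property~1 forces $|\mathcal{A}(e^n)\cap\mathcal{O}(\mathcal{Z},z^{rn})|\le\delta 2^{(R-r)n}$ for all such triples, and then Lemma~\ref{thm:suff_cond_1} immediately gives that $\mathcal{C}$ allows reliable communication with probability of decoding error at most $\delta$. Hence the success probability is at least $1$ minus the probability of the complementary bad event, and the whole task reduces to bounding that probability.

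To bound the bad event I would split it, using $\mathbb{P}(A\cup B)\le\mathbb{P}(A)+\mathbb{P}(B\cap A^c)$, into (i) the event that $\mathcal{H}(\mathcal{Z})$ holds for some $\mathcal{Z}$, and (ii) the event that $V(\mathcal{Z},z^{rn},e^n)>\delta 2^{(R-r)n}$ for some triple while $\mathcal{H}^c(\mathcal{Z}')$ holds for every $\mathcal{Z}'$. For (i), Lemma~\ref{thm:E_comp_1} gives $\mathbb{P}_{\mathcal{C}}(\mathcal{H}(\mathcal{Z}))\le\alpha_k 2^{-n(k\theta-r)}$ for each fixed $\mathcal{Z}$, so a union bound over the $|\mathscr{Z}|=\binom{n}{rn}\le 2^n$ coordinate sets bounds (i) by $\alpha_k 2^{(1+r-k\theta)n}$. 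For (ii), I would fix a triple $(\mathcal{Z},z^{rn},e^n)$ and observe that, since $\mathcal{H}^c(\mathcal{Z})$ holds on the relevant event, the contribution of that triple is at most $\mathbb{P}_{\mathcal{C}}(V(\mathcal{Z},z^{rn},e^n)>\delta 2^{(R-r)n}\mid\mathcal{H}^c(\mathcal{Z}))$; expanding this by conditioning on $\mathcal{Z}(G)=A$ and averaging gives
\[
\mathbb{P}_{\mathcal{C}}\!\left(V>\delta 2^{(R-r)n}\,\big|\,\mathcal{H}^c(\mathcal{Z})\right)=\sum_{A}\mathbb{P}_{\mathcal{C}}\!\left(V>\delta 2^{(R-r)n}\,\big|\,\mathcal{H}^c(\mathcal{Z}),\,\mathcal{Z}(G)=A\right)\mathbb{P}_{\mathcal{C}}\!\left(\mathcal{Z}(G)=A\,\big|\,\mathcal{H}^c(\mathcal{Z})\right)\le\beta(n),
\]
where $V$ abbreviates $V(\mathcal{Z},z^{rn},e^n)$, the sum runs over the matrices $A$ consistent with $\mathcal{H}^c(\mathcal{Z})$, and the final inequality is the hypothesis of the lemma applied term by term (the probabilities $\mathbb{P}_{\mathcal{C}}(\mathcal{Z}(G)=A\mid\mathcal{H}^c(\mathcal{Z}))$ sum to $1$). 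A union bound over the at most $\binom{n}{rn}\cdot 2^{rn}\cdot|\mathcal{B}_{pn}(0)|\le 2^n\cdot 2^{rn}\cdot 2^{H_2(p)n}=2^{(1+r+H_2(p))n}$ triples then bounds (ii) by $2^{(1+r+H_2(p))n}\beta(n)$. Adding the two contributions bounds the bad probability by $2^{(1+r+H_2(p))n}\beta(n)+\alpha_k 2^{(1+r-k\theta)n}$, which is exactly the stated complement of the success probability.

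I do not expect a genuine obstacle: once Lemmas~\ref{thm:suff_cond_1} and~\ref{thm:E_comp_1} and Property~1 are available, the argument is pure bookkeeping. The two points that require attention are (a) keeping the two union bounds at the right granularity — over $\mathscr{Z}$ only for the $\mathcal{H}$-term, over all triples $(\mathcal{Z},z^{rn},e^n)$ for the $V$-term — so that the $\mathcal{H}$-contribution does not pick up a spurious $2^{(r+H_2(p))n}$ factor; and (b) checking that the conditioning on $\mathcal{Z}(G)=A$ is valid, i.e.\ that every matrix $A$ occurring with positive probability given $\mathcal{H}^c(\mathcal{Z})$ is ``consistent with $\mathcal{H}^c(\mathcal{Z})$'' in the sense of the definition just before the lemma, so that the hypothesis can be invoked for each summand in the average over $A$. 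The exponential factors all come from the standard volume estimates $\binom{n}{rn}\le 2^n$ and $|\mathcal{B}_{pn}(0)|\le 2^{H_2(p)n}$ (the latter valid because $p\le 1/2$).
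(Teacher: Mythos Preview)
Your proposal is correct and follows essentially the same route as the paper: invoke Lemma~\ref{thm:suff_cond_1} together with Property~1 to reduce to controlling $V(\mathcal{Z},z^{rn},e^n)$, split off the event $\mathcal{H}(\mathcal{Z})$ and bound it via Lemma~\ref{thm:E_comp_1}, and on $\mathcal{H}^c(\mathcal{Z})$ expand over $\mathcal{Z}(G)=A$ to feed in the hypothesis, finishing with the union bound and the volume estimates $|\mathscr{Z}|\le 2^n$, $|\mathcal{B}_{pn}(0)|\le 2^{H_2(p)n}$. The only cosmetic difference is organizational: you define a single global good event and split its complement into two pieces, whereas the paper first union-bounds over $\mathcal{Z}$, then splits on $\mathcal{H}(\mathcal{Z})$ versus $\mathcal{H}^c(\mathcal{Z})$, then union-bounds over $(z^{rn},e^n)$; both bookkeepings yield the identical final bound.
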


  \begin{proof}[Proof of Lemma \ref{thm:suff2}]
  From our first sufficient condition (Lemma \ref{thm:suff_cond_1}), a randomly chosen code from the distribution $F(n,Rn,k)$ \textit{does not} allow reliable communication over the $\text{AWTC}_{p,r}$ with probability of decoding error $\delta>0$ with probability at most
  \begin{align} \label{eq:suff2_1}
  \mathbb{P}_{\mathcal{C}}\left(\bigcup_{\mathcal{Z} \in \mathscr{Z}} \bigcup_{z^{rn} \in \{0,1\}^{rn}} \bigcup_{e^n \in \mathcal{B}_{pn}(0)} \left\{ V(\mathcal{Z},z^{rn},e^n)> \delta 2^{(R-r)n} \right\} \right)
  \end{align}
  over the code design. Following a union bound and an application of Baye's formula, (\ref{eq:suff2_1}) is bounded above by
  \begin{align} \label{eq:suff2_2}
  \sum_{\mathcal{Z} \in \mathscr{Z}} \left[ \sum_{z^{rn} \in \{0,1\}^{rn}} \sum_{e^n \in \mathcal{B}_{pn}(0)} \mathbb{P}_{\mathcal{C}} \left( V(\mathcal{Z},z^{rn},e^n) > \delta 2^{(R-r)n} | \mathcal{H}^c(\mathcal{Z})\right) + \mathbb{P}_{\mathcal{C}}(\mathcal{H}(\mathcal{Z})) \right]
  \end{align}
  where we note that the $|\mathscr{Z}| \leq 2^n$ and $|\mathcal{B}_{pn}(0)| \leq 2^{nH_2(p)}$. Recall from Lemma \ref{thm:E_comp_1} that for $\mathcal{Z} \in \mathscr{Z}$, $\mathbb{P}_{\mathcal{C}}(\mathcal{H}(\mathcal{Z}))$ is at most $\alpha_k 2^{-(k\theta-r)n}$ for some constant $\alpha_k$ that depends only on $k$. For each $\mathcal{Z}$, $z^{rn}$ and $e^n$ in the sums of (\ref{eq:suff2_2}), we bound the conditional probability that $V(\mathcal{Z},z^{rn},e^n)$ is greater than $\delta 2^{(R-r)n}$.

  Let $\mathcal{Z} \in \mathscr{Z}$, $z^{rn} \in \{0,1\}^{rn}$ and $e^n \in \mathcal{B}_{pn}(0)$, and denote $\mathcal{H}(\mathcal{Z})$ as $\mathcal{H}$ without denoting the dependence on $\mathcal{Z}$. Applying Baye's formula again,
  \begin{align}
  &\mathbb{P}_{\mathcal{C}}\left(V(\mathcal{Z},z^{rn},e^n) > \delta 2^{(R-r)n} \big| \mathcal{H}^c\right)\nonumber \\
  & = \hspace{-1.5 em} \sum_{ \substack{ A \in  \{0,1\}^{ rn \times n}: \\ A \text{ consistent w/ } \mathcal{H}^c}} \hspace{-1em}\mathbb{P}_{\mathcal{C}}\left(V(\mathcal{Z},z^{rn},e^n) > \delta 2^{(R-r)n} \big| \mathcal{H}^c, \mathcal{Z}(G) = A \right)  \mathbb{P}_{\mathcal{C}}(\mathcal{Z}(G) = A|\mathcal{H}^c). \label{eq:suff2_3}
  \end{align}
  Suppose that for any $\mathcal{Z} \in \mathscr{Z}$, $z^{rn} \in \{0,1\}^{rn}$, $e^n \in \mathcal{B}_{pn}(0)$ and any $A$ consistent with $\mathcal{H}^c(\mathcal{Z})$, the quantity $\mathbb{P}_{\mathcal{C}}(V(\mathcal{Z},z^{rn},e^n) > \delta 2^{(R-r)n} | \mathcal{H}^c(\mathcal{Z}), \mathcal{Z}(G)=A)$ is at most $\beta(n)$. Then (\ref{eq:suff2_3}) is at most $\beta(n)$. In turn, (\ref{eq:suff2_2}) is at most $2^{n} \left( 2^{(r+H_2(p))n} \beta(n) + \alpha_k 2^{-(k\theta-r)} \right)$.
  \end{proof}

  \begin{remark} \label{rmk:drop_notation}
  In the sequel, we fix $\mathcal{Z} \in \mathscr{Z}$, $z^{rn} \in \{0,1\}^{rn}$, $e^n \in \mathcal{B}_{pn}(0)$ and $A \in \{0,1\}^{rn \times n}$ such that $A$ is consistent with $\mathcal{H}^c(\mathcal{Z})$. In turn, we drop the arguments from all functions/sets/variables denoting dependency on $e^n$ ,$\mathcal{Z}$ and $z^{rn}$ when the lack of notation does not cause confusion.
  \end{remark}

  \begin{remark} \label{thm:E_comp_dep}
  The indicator of the event that the $(n,Rn,k)$ code $\mathcal{C}$ is contained in $\mathcal{H}^c$ depends only on the rows of the generator matrix $G$ from the partition $\mathcal{Z}(G)$, and is independent of the rows from the partition $\mathcal{Z}^c(G)$. It follows that for any event $\mathcal{I}$, 
  \begin{equation} \nonumber
  \mathds{1}\{ \mathcal{H}^c \} \rightarrow \mathcal{Z}(G) \rightarrow \mathds{1} \{\mathcal{I}\}
  \end{equation}
  forms a Markov chain.
  \end{remark}

  The above remark can be verified by inspection of the definition of $\mathcal{H}$ (c.f. Definition \ref{def:H}). 
  
  The next result shows that we can condition on some information set of $\mathcal{C}$ while preserving the $k$-wise independence of codewords. We emphasize that beyond the $k$-wise independence property of $\mathcal{C}$, the linear structure of the pseudolinear code is critical here.

  \begin{lemma} \label{thm:E_comp_2}
  The view of the codewords of $\mathcal{C}$ at the adversary's unobserved coordinates $\mathcal{Z}^c$, i.e., $$\{ \mathcal{Z}^c(X(u)): u \in \mathcal{U} \},$$ conditioned on $\mathcal{H}^c$ and $\mathcal{Z}(G) = A$ are uniformly distributed in $\{0,1\}^{(1-r)n}$ and $k$-wise independent. 
  \end{lemma}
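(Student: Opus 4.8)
The plan is to work directly with the two-stage pseudolinear structure. Recall that a codeword is $X^n(u) = G h^m(u)$, where $h^m(u)$ is the (nonlinear) column-selection map into the parity check matrix $H$. Write the generator matrix in block form induced by the adversary's coordinate set: $G = \begin{bmatrix} \mathcal{Z}(G) \\ \mathcal{Z}^c(G) \end{bmatrix}$ (up to a permutation of rows). Then $\mathcal{Z}(X^n(u)) = \mathcal{Z}(G)\, h^m(u)$ depends on $G$ only through $\mathcal{Z}(G)$, and $\mathcal{Z}^c(X^n(u)) = \mathcal{Z}^c(G)\, h^m(u)$ depends on $G$ only through $\mathcal{Z}^c(G)$. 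Since $G$ is uniform over $\{0,1\}^{n\times m}$, the blocks $\mathcal{Z}(G)$ and $\mathcal{Z}^c(G)$ are independent and each uniform over binary matrices of the appropriate size. The event $\mathcal{H}^c$ is a function of the sets $\mathcal{O}(\mathcal{Z},z^{rn})$, which by the previous bullet points depend on $G$ only through $\mathcal{Z}(G)$; hence $\mathds{1}\{\mathcal{H}^c\}$ is a deterministic function of $\mathcal{Z}(G)$ (this is Remark \ref{thm:E_comp_dep}). Consequently, conditioning on both $\mathcal{H}^c$ and $\mathcal{Z}(G) = A$ is the same as conditioning on $\mathcal{Z}(G) = A$ alone (for any $A$ consistent with $\mathcal{H}^c$), and by the independence of the two blocks this leaves the conditional law of $\mathcal{Z}^c(G)$ unchanged: it is still uniform over $\{0,1\}^{(1-r)n \times m}$.

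It then remains to show that, with $\mathcal{Z}^c(G)$ uniform over $\{0,1\}^{(1-r)n\times m}$, the collection $\{\mathcal{Z}^c(G)\,h^m(u)\}_{u\in\mathcal{U}}$ is uniform on $\{0,1\}^{(1-r)n}$ and $k$-wise independent. This is exactly the statement of Lemma \ref{thm:k_wise_cw} applied to the ``code'' with generator $\mathcal{Z}^c(G)$ and the \emph{same} parity check matrix $H$: the only properties of the generator used in that lemma are that it has $m$ columns and is drawn uniformly. Concretely, for any set of $k$ distinct messages $u_1,\dots,u_k$, the vectors $h^m(u_1),\dots,h^m(u_k)$ are $k$ distinct columns of $H$ (plus possibly the zero vector for the all-zero message — the paper handles $h^m(0)=0$, and one notes that including the zero column among $\leq k$ columns of a parity-check matrix of minimum distance $\geq k+1$ does not destroy linear independence of a size-$\leq k$ sub-collection of the nonzero ones, so one either restricts to nonzero messages or argues the all-zero codeword is deterministically $0$ and remove it from the count); since $H$ has minimum distance at least $k+1$, any $k$ columns of $H$ are linearly independent, so the map $B \mapsto (B h^m(u_1),\dots,B h^m(u_k))$ from $\{0,1\}^{(1-r)n\times m}$ to $\{0,1\}^{(1-r)n\times k}$ is surjective with all fibers of equal size. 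Hence the tuple $(\mathcal{Z}^c(G)h^m(u_1),\dots,\mathcal{Z}^c(G)h^m(u_k))$ is uniform on $(\{0,1\}^{(1-r)n})^k$, which gives both the marginal uniformity and the $k$-wise independence.

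Assembling the pieces: condition on $\mathcal{Z}(G)=A$ with $A$ consistent with $\mathcal{H}^c$; use Remark \ref{thm:E_comp_dep} to discard the extra conditioning on $\mathcal{H}^c$; use block-independence of $G$ to conclude $\mathcal{Z}^c(G)$ is still uniform; then invoke the argument of Lemma \ref{thm:k_wise_cw} verbatim with $\mathcal{Z}^c(G)$ in place of $G$. The main obstacle — really the only subtle point — is the bookkeeping around the all-zero message and making precise that the min-distance-$(k+1)$ property of $H$ still delivers linear independence of the relevant columns after restricting $G$ to the sub-matrix $\mathcal{Z}^c(G)$; but this is immediate because the column vectors $h^m(u)$ are unchanged by the restriction — only the left-multiplying matrix shrinks — so the linear-independence fact about columns of $H$ is exactly what Lemma \ref{thm:k_wise_cw}'s proof already uses.
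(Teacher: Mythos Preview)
Your proposal is correct and follows essentially the same approach as the paper: use the Markov chain of Remark~\ref{thm:E_comp_dep} to drop the conditioning on $\mathcal{H}^c$, use the block-independence of $\mathcal{Z}(G)$ and $\mathcal{Z}^c(G)$ to drop the conditioning on $\mathcal{Z}(G)=A$, and then appeal to the $k$-wise independence coming from the linear independence of any $k$ columns of $H$. The only cosmetic difference is that the paper cites Lemma~\ref{thm:k_wise_cw} as a black box on the full codewords and then restricts to the $\mathcal{Z}^c$ coordinates, whereas you re-run the argument of Lemma~\ref{thm:k_wise_cw} directly with $\mathcal{Z}^c(G)$ in place of $G$; these are the same computation.
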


  \begin{proof}[Proof of Lemma \ref{thm:E_comp_2}] \label{sec:E_comp_2_proof}
  Let $u_1,\ldots,u_k$ be any $k$ unique messages in $\mathcal{U}$ and let $x^{(1-r)n}_1, \ldots, x^{(1-r)n}_k$ be any $k$ strings in $\{0,1\}^{(1-r)n}$. We show that 
  \begin{equation} \label{eq:E_comp_2}
  \begin{aligned}
  &\mathbb{P}_{\mathcal{C}}\left(\bigcap_{i=1}^k \left\{ \mathcal{Z}^c(X^n(u_i)) = x^{(1-r)n}_i \right\} \bigg| \mathcal{H}^c,\mathcal{Z}(G) = A\right) = \prod_{i=1}^k \mathbb{P}_{\mathcal{C}}\left( \mathcal{Z}^c(X^n(u_i)) = x^{(1-r)n}_i \right).
  \end{aligned}
  \end{equation}
  Let $\mathcal{F}_i$ denote the event ``$\mathcal{Z}^c(X^n(u_i)) = x^{(1-r)n}_i$''. Since $\mathds{1}\{\mathcal{H}^c\} \rightarrow \mathcal{Z}(G) \rightarrow \mathds{1}\{ \cap_{i=1}^k \mathcal{F}_i \}$ is a Markov chain, the LHS of (\ref{eq:E_comp_2}) is equal to
  \begin{align}
  \mathbb{P}_{\mathcal{C}} \left( \bigcap_{i=1}^k \mathcal{F}_i \bigg| \mathcal{Z}(G) = A \right) 
  \stackrel{(a)}{=} \mathbb{P}_{\mathcal{C}} \left( \bigcap_{i=1}^k \mathcal{F}_i \right) 
  \stackrel{(b)}{=} \prod_{i=1}^k \mathbb{P}_{\mathcal{C}} \left( \mathcal{F}_i \right) \nonumber
  \end{align}
  where (a) follows from the properties that i) $\mathcal{Z}^c(G)$ is independent of $\mathcal{Z}(G)$ and ii) $\mathcal{Z}^c(X^n(u_i))$ depends on $\mathcal{Z}^c(G)$ and is independent of $\mathcal{Z}(G)$, and (b) follows from the property that the codewords $X^n(u_1), \ldots, X^n(u_k)$ are independent (Lemma \ref{thm:k_wise_cw}).
  \end{proof}

  Recall that $V$ is defined in (\ref{eq:V}) as $V \triangleq \sum_{i=1}^S \sum_{u \in \mathcal{O} \cap \mathcal{S}_i \times \mathcal{S}^c_i} V(u,u')$. We break apart the sum $V$ into smaller sums which in turn are easier to analyze. For $i=1,2,\ldots,S$  define the sum 
  \begin{equation} \label{eq:Vi}
  V_i \triangleq \sum_{u \in \mathcal{O} \cap \mathcal{S}_i \times \mathcal{S}^c_i} V(u,u').
  \end{equation}
  Clearly, $V = \sum_{i=1}^S V_i$.

  \begin{lemma} \label{thm:cond_Vi}
  For $i \in [S]$ and conditioned on $\mathcal{H}$ and $\mathcal{Z}(G)=A$, the sum $V_i$ as defined in (\ref{eq:Vi}) has a conditional expectation 
  \begin{equation} \nonumber
  \mathbb{E}_{\mathcal{C}}\left[V_i|\mathcal{H}^c,\mathcal{Z}(G)=A\right] \leq n 2^{(R-r)n- \epsilon n + 2 \theta n}.
  \end{equation}
  \end{lemma}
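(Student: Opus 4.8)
The plan is to compute the conditional expectation directly from linearity. Conditioning on $\mathcal{Z}(G)=A$ freezes the set $\mathcal{O}\triangleq\mathcal{O}(\mathcal{Z},z^{rn})$, while, by Lemma \ref{thm:E_comp_2}, it leaves the unobserved codeword views $\{\mathcal{Z}^c(X^n(u))\}_{u\in\mathcal{U}}$ uniform on $\{0,1\}^{(1-r)n}$ and $k$-wise independent (here only pairwise independence will be used); the sets $\mathcal{S}_i,\mathcal{S}_i^c$ from Lemma \ref{thm:S_set_constr} are deterministic, so $\mathcal{O}\cap\mathcal{S}_i$ is a fixed set once $A$ is fixed as in Remark \ref{rmk:drop_notation}. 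Thus $\mathbb{E}_{\mathcal{C}}[V_i\mid\mathcal{H}^c,\mathcal{Z}(G)=A]=\sum_{u\in\mathcal{O}\cap\mathcal{S}_i}\sum_{u'\in\mathcal{S}_i^c}\mathbb{P}_{\mathcal{C}}(X^n(u')\in\mathcal{B}_{pn}(X^n(u)\oplus e^n)\mid\mathcal{H}^c,\mathcal{Z}(G)=A)$, and it suffices to show the inner sum is at most $2^{(\theta-\epsilon)n}$ for each fixed $u\in\mathcal{O}$; since $A$ is consistent with $\mathcal{H}^c$, Definition \ref{def:H} gives $|\mathcal{O}\cap\mathcal{S}_i|\le|\mathcal{O}|\le 2^{(R-r+\theta)n}$, and multiplying yields $\mathbb{E}_{\mathcal{C}}[V_i\mid\mathcal{H}^c,\mathcal{Z}(G)=A]\le 2^{(R-r-\epsilon+2\theta)n}\le n\,2^{(R-r-\epsilon+2\theta)n}$, which is the claim (the factor $n$ being ample slack).

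To bound the inner sum, fix $u\in\mathcal{O}\cap\mathcal{S}_i$; since $u\in\mathcal{S}_i$ we have $\mathcal{S}_i^c\subseteq\mathcal{U}\setminus\{u\}$, so the inner sum is at most the expected number of ``colliding'' codewords $\mathbb{E}_{\mathcal{C}}[\,|\{u'\neq u:X^n(u')\in\mathcal{B}_{pn}(X^n(u)\oplus e^n)\}|\mid\mathcal{H}^c,\mathcal{Z}(G)=A\,]$. I would estimate this by conditioning further on $X^n(u)=x$: for any target $y\in\{0,1\}^n$, the event $X^n(u')=y$ forces the deterministic equality $\mathcal{Z}(X^n(u'))=A\,h^m(u')=\mathcal{Z}(y)$ on the observed coordinates, and given that, it has conditional probability exactly $2^{-(1-r)n}$ on the unobserved coordinates, by the uniformity and pairwise independence of $\mathcal{Z}^c(X^n(u'))$ (relative to $\mathcal{Z}^c(X^n(u))$) from Lemma \ref{thm:E_comp_2}. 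Summing over $y\in\mathcal{B}_{pn}(x\oplus e^n)$ and over $u'\neq u$ then gives $2^{-(1-r)n}\sum_{y\in\mathcal{B}_{pn}(x\oplus e^n)}|\mathcal{O}(\mathcal{Z},\mathcal{Z}(y))|$, and invoking $\mathcal{H}^c$ (every fibre satisfies $|\mathcal{O}(\mathcal{Z},z')|\le 2^{(R-r+\theta)n}$) together with $|\mathcal{B}_{pn}(0)\cap\{0,1\}^n|\le 2^{nH_2(p)}$ yields $2^{-(1-r)n}\,2^{nH_2(p)}\,2^{(R-r+\theta)n}$. Since $R+H_2(p)=1-\epsilon$, this collapses to $2^{(\theta-\epsilon)n}$; the estimate is uniform in $x$, so it survives taking expectation over $X^n(u)$.

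I expect the main obstacle to be keeping the exponent tight inside the inner sum. The obvious split --- bounding the number of relevant $u'$ by $|\mathcal{U}|=2^{Rn}$ and the per-pair probability by roughly $2^{(H_2(p)-1+r)n}$ separately --- overshoots by a factor of about $2^{rn}$. The point is that only those $u'$ whose forced view $A\,h^m(u')$ lies within Hamming distance $pn$ of $z^{rn}\oplus\mathcal{Z}(e^n)$ can contribute, that $\mathcal{H}^c$ controls how many such $u'$ there are one fibre at a time, and that the number of codewords inside a radius-$pn$ Hamming ball of $\{0,1\}^n$ is governed by $2^{nH_2(p)}$ rather than by the ambient $2^n$; equivalently, in a shell-by-shell computation the budget split between the $rn$ observed and $(1-r)n$ unobserved coordinates recombines via Vandermonde's identity $\sum_{w+j\le pn}\binom{rn}{w}\binom{(1-r)n}{j}=\sum_{\ell\le pn}\binom{n}{\ell}\le 2^{nH_2(p)}$. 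A secondary point worth checking is that $u\in\mathcal{S}_i$ and $u'\in\mathcal{S}_i^c$ are automatically distinct, so no spurious $u'=u$ term (contributing a useless $+1$, which is too large at this scale) enters the inner sum.
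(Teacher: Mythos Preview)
Your argument is correct and in fact cleaner than the paper's. Both proofs begin by freezing $\mathcal{O}$ via the conditioning $\mathcal{Z}(G)=A$ and invoking Lemma~\ref{thm:E_comp_2} for the unobserved views, and both ultimately use $\mathcal{H}^c$ twice (once for the outer sum over $u\in\mathcal{O}$, once to cap the number of $u'$ per fibre). The difference is in how the inner sum over $u'$ is handled. The paper partitions $\mathcal{S}_i^c$ by the observed view $s^{rn}=\mathcal{Z}(X^n(u'))$, writes the per-pair probability as $2^{-(1-r)n(1-H_2((pn-j)/((1-r)n)))}$ where $j=d_H(s^{rn},z^{rn}\oplus\mathcal{Z}(e^n))$, bounds the number of $s^{rn}$ at distance $j$ by $2^{rnH_2(j/(rn))}$, and then must show by a concavity argument that the exponent
\[
E(j)=rnH_2\!\Bigl(\tfrac{j}{rn}\Bigr)+(1-r)nH_2\!\Bigl(\tfrac{pn-j}{(1-r)n}\Bigr)+2(R-r+\theta)n-(1-r)n
\]
is maximized at $j=rpn$ with value $(R-r-\epsilon+2\theta)n$; the factor $n$ in the final bound then arises from summing over the $O(n)$ values of $j$. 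You instead sum over the $n$-bit target string $y\in\mathcal{B}_{pn}(x\oplus e^n)$ directly, note that each $y$ can be hit only by $u'\in\mathcal{O}(\mathcal{Z},\mathcal{Z}(y))$ (a fibre of size at most $2^{(R-r+\theta)n}$ under $\mathcal{H}^c$) with conditional probability $2^{-(1-r)n}$, and use the one-line bound $|\mathcal{B}_{pn}(0)|\le 2^{nH_2(p)}$ in the full $n$-dimensional space. As you observe, Vandermonde's identity shows these two computations are combinatorially the same, but your route bypasses the entropy-splitting and concavity step entirely; the cost is only that the factor $n$ becomes genuine slack rather than an artifact of the shell count.
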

  
  \begin{proof}[Proof of Lemma \ref{thm:cond_Vi}]
  Let $i \in [\mathcal{S}]$. We begin by characterizing the distribution of the terms $\{ V(u,u') \}$ of the sum $V_i$. Recall that $V(u,u')$ is defined as the indicator of the event ``$X^n(u') \in \mathcal{B}_{pn} \left( X^n(u) \oplus e^n\right)$''. Conditioned on the event ``$\mathcal{Z}(G)=A$'', the codeword of any message $u\in \mathcal{O} \cap \mathcal{S}_i$ over the coordinates $\mathcal{Z}$ is $\mathcal{Z}(X^n(u)) = z^{rn}$. Similarly, for any message $u' \in \mathcal{S}^c_i$, $\mathcal{Z}(X^n(u'))$ is deterministic. In turn, for any $u \in \mathcal{O} \cap \mathcal{S}_i$ and any $u' \in \mathcal{S}^c_i$, quantity $V(u,u')$ conditioned on $\mathcal{H}$ and $\mathcal{Z}(G)=A$ is the indicator of the event
  \begin{equation} \label{eq:event_ball}
   \mathcal{Z}^c(X^n(u')) \in \mathcal{B}_{\left( pn-D(u') \right)} \left( \mathcal{Z}^c(X^n(u)) \oplus \mathcal{Z}^c(e^n)\right)
  \end{equation}
  where $D(u') \triangleq d_H(\mathcal{Z}(X^n(u')),z^{rn} + \mathcal{Z}(e^n))$ is a deterministic quantity, and where $\mathcal{Z}^c(X^n(u))$ and $\mathcal{Z}^c(X^n(u'))$ are uniform over $\{0,1\}^{(1-r)n}$ and independent following Lemma \ref{thm:E_comp_2}. 
  
  With the above distribution in mind, we bound the conditional probability of event (\ref{eq:event_ball}). We use the well known fact that a Hamming ball of radius $t>0$ centered around any point in the space $\{0,1\}^{(1-r)n}$ has a volume at most $2^{(1-r)n H_2(\frac{t}{(1-r)n})}$ \cite{Macwilliams1977}. Hence, the conditional probability of event (\ref{eq:event_ball}) is
  \begin{align} 
  &\mathbb{E}_{\mathcal{C}}[V(u,u')|\mathcal{H},\mathcal{Z}(G)=A] \leq 2^{-(1-r)n \left(1 - H_2\left(\frac{pn - D(u')}{(1-r)n}\right)\right)} \label{eq:Vuu_cond}
  \end{align}
  if $D(u') \leq pn$ and $\mathbb{E}_{\mathcal{C}}[V(u,u')|\mathcal{H},\mathcal{Z}(G)=A] = 0$ if $D(u') > pn$.
  
  We now turn to the sum $V_i$. The conditional expectation of $V_i$ is
  \begin{align}
  &\mathbb{E}_{\mathcal{C}}[V_i\big|\mathcal{H},\mathcal{Z}(G)=A] = \hspace{-1em} \sum_{u \in \mathcal{O} \cap \mathcal{S}_i} \sum_{u' \in \mathcal{S}^c_i} \mathbb{E}_{\mathcal{C}}[V(u,u')\big|\mathcal{H},\mathcal{Z}(G)=A] \nonumber \\
  & \stackrel{(a)}{\leq} \sum_{u \in \mathcal{O} \cap \mathcal{S}_i} \sum_{\substack{u' \in \mathcal{S}^c_i: \\ D(u')\leq pn}} 2 ^{-(1-r)n \left(1 - H_2(\left( \frac{pn - D(u')}{(1-r)n}\right) \right)} \nonumber \\
  & \stackrel{(b)}{\leq} \sum_{\substack{u' \in \mathcal{S}^c_i: \\ D(u') \leq pn}} |\mathcal{O}| 2 ^{-(1-r)n \left(1 - H_2(\left( \frac{pn - D(u')}{(1-r)n}\right) \right)}. \label{eq:Vp_cond}
  \end{align}
  where (a) follows from (\ref{eq:Vuu_cond}) and (b) follows from the inequality $|\mathcal{O}\cap\mathcal{S}_i| \leq |\mathcal{O}|$. At this point, it is useful to recall some notation that we have thus far suppressed for presentation purposes (c.f. Remark \ref{rmk:drop_notation}). In particular, recall that the set $\mathcal{O} = \mathcal{O}(\mathcal{Z},z^{rn})$ depends on the adversary's coordinate set $\mathcal{Z}$ and observation string $z^{rn}$, both of which are fixed. With this notation in hand, we consider the dumby observation string $s^{rn} \in \{0,1\}^{rn}$ and partition $\mathcal{S}^c_i$ into subsets $\mathcal{S}^c_i \cap \mathcal{O}(\mathcal{Z},s^{rn})$ for all $s^{rn} \in \{0,1\}^{rn}$. For any message $u' \in \mathcal{S}^c_i \cap \mathcal{O}(\mathcal{Z},s^{rn})$, we have that $\mathcal{Z}(X^n(u')) = s^{rn}$ and we can write $D(u') = D(s^{rn}) = d_H(s^{rn},z^{rn} + \mathcal{Z}(e^n))$. In turn, (\ref{eq:Vp_cond}) is equal to 
  \begin{align}
  & \sum_{\substack{s^{rn} \in \{0,1 \}^{rn} \\ D(s^{rn}) \leq pn}} \sum_{u' \in \mathcal{S}^c_i \cap \mathcal{O}(\mathcal{Z},s^{rn})} |\mathcal{O}(\mathcal{Z},z^{rn})| 2^{-(1-r)n \left(1 - H_2\left(\frac{pn - D(s^{ rn})}{(1-r)n}\right)\right)} \nonumber \\
  & \stackrel{(c)}{\leq} \sum_{\substack{s^{rn} \in \{0,1\}^{rn}: \\ D(s^{rn}) \leq pn}} 2^{2(R-r)n +2 \theta n -(1-r)n \left(1 - H_2\left(\frac{pn - D(s^{rn})}{(1-r)n}\right) \right) } \nonumber \\
  & = \sum_{j=0}^{\min \{pn,rn\}} {rn \choose j} 2^{2(R-r)n +2 \theta n -(1-r)n \left(1 - H_2\left(\frac{pn - j}{(1-r)n}\right) \right) } \nonumber \\
  & \stackrel{(d)}{\leq} \sum_{j=0}^{\min \{pn,rn\}} 2^{rnH_2(\frac{j}{rn})+2(R-r)n +2 \theta n -(1-r)n \left(1 - H_2\left(\frac{pn - j}{(1-r)n}\right) \right) }. \nonumber
  \end{align}
  where (c) follows from the definitional property that given $\mathcal{H}^c$, $|\mathcal{O}(\mathcal{Z},s^{rn})| \leq 2^{(R-r)n+\theta n}$ for all $s^{rn} \in \{0,1\}^{rn}$, and (d) follows from the fact that the quantity ${rn \choose j}$ is bounded above by $2^{rn H_2(\frac{j}{rn})}$.

  To finish the proof, we show that the exponent 
  \begin{align}
  &E(j) = rn H_2\left(\frac{j}{rn}\right) + 2(R-r)n + 2 \theta n  - (1-r)n \left( 1 - H_2\left( \frac{pn-j}{(1-r)n}\right) \right) \nonumber
  \end{align}
  of the term inside the sum of (d) is bounded above by $E(rpn) = (R-r)n - \epsilon n + 2\theta n$ for all $j \in \{0,1,\ldots,\min \{pn,rn\} \}$. We remark that for the function $E(x)$ over a continuous variable $x \in [0,\min\{pn,rn\}]$, the derivative $\frac{\partial E(rpn)}{\partial x}$ is $0$. Thus, it is sufficient to show that $E(x)$ is concave over $x \in [0,\min\{pn,rn\}]$. Towards this end, note that the terms $H_2(\frac{x}{rn})$ and $H_2(\frac{pn-x}{(1-r)n})$ are concave following that they are affine compositions of the concave function $H_2(x)$. It follows that $E(x)$ is a positive-weighted sum of concave functions, which in turn, is concave.
  \end{proof}

  \begin{lemma} \label{thm:Vk_wise_forest}
  For $i \in [S]$ and conditioned on $\mathcal{H}^c$ and $\mathcal{Z}(G) = A$, the random variables $\{V(u,u')\}_{(u,u') \in \mathcal{O} \cap \mathcal{S}^c_i}$ are $\lfloor k/2 \rfloor$-wise independent over every forest.
  \end{lemma}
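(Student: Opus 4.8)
The plan is to reduce the claim to a fact about the rank of the incidence matrix of a forest, after using the linear part of the pseudolinear encoder to rewrite each $V(u,u')$ as a function of the XOR of exactly two codeword views. Throughout, $\mathcal{V}_1\triangleq\mathcal{O}\cap\mathcal{S}_i$ and $\mathcal{V}_2\triangleq\mathcal{S}^c_i$, so that the sum $V_i$ is indexed over pairs in $\mathcal{V}_1\times\mathcal{V}_2$; note that $\mathcal{V}_1$ and $\mathcal{V}_2$ are disjoint, and that once we condition on $\mathcal{Z}(G)=A$ the set $\mathcal{O}$ is deterministic, so $\mathcal{V}_1$ and $\mathcal{V}_2$ are fixed index sets and the definition of $k$-wise independence over every forest applies to them verbatim.

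First I would condition on $\mathcal{H}^c$ and $\mathcal{Z}(G)=A$ once and for all and abbreviate $Y_w\triangleq\mathcal{Z}^c(X^n(w))$ for $w\in\mathcal{U}$. By Lemma~\ref{thm:E_comp_2}, under this conditioning the views $\{Y_w\}_{w\in\mathcal{U}}$ are uniform on $\{0,1\}^{(1-r)n}$ and $k$-wise independent. Exactly as in the proof of Lemma~\ref{thm:cond_Vi}, for $u\in\mathcal{V}_1$ and $u'\in\mathcal{V}_2$ the variable $V(u,u')$ equals the indicator of $Y_{u'}\in\mathcal{B}_{pn-D(u')}\!\left(Y_u\oplus\mathcal{Z}^c(e^n)\right)$, with $D(u')$ and $\mathcal{Z}^c(e^n)$ deterministic given $A$; since $d_H\!\left(Y_{u'},Y_u\oplus\mathcal{Z}^c(e^n)\right)$ depends on $(Y_u,Y_{u'})$ only through $Y_u\oplus Y_{u'}$, I can write $V(u,u')=g_{u'}(Y_u\oplus Y_{u'})$ for a fixed function $g_{u'}\colon\{0,1\}^{(1-r)n}\to\{0,1\}$ (possibly the zero function when $D(u')>pn$, which is harmless).

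Next I would take an arbitrary undirected bipartite forest $\mathcal{F}=(\mathcal{V}_1,\mathcal{V}_2,\mathcal{E})$ with $|\mathcal{E}|=\lfloor k/2\rfloor$ and let $\mathcal{W}$ be the set of vertices meeting at least one edge of $\mathcal{E}$. Each connected component of $\mathcal{F}$ carrying an edge is a tree, hence has exactly one more vertex than edges, so $|\mathcal{W}|\le 2|\mathcal{E}|=2\lfloor k/2\rfloor\le k$; thus $k$-wise independence and uniformity of the $Y_w$ make $(Y_w)_{w\in\mathcal{W}}$ uniform on $\left(\{0,1\}^{(1-r)n}\right)^{\mathcal{W}}$. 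Consider the $\mathbb{F}_2$-linear map $(Y_w)_{w\in\mathcal{W}}\mapsto(Y_u\oplus Y_{u'})_{(u,u')\in\mathcal{E}}$, which acts in each of the $(1-r)n$ bit coordinates through the vertex--edge incidence matrix $M$ of $\mathcal{F}$. The key fact is that the edge set of a forest is independent in the binary cycle matroid, i.e.\ the columns of $M$ are $\mathbb{F}_2$-linearly independent, so $M$ has rank $|\mathcal{E}|$ and the map is surjective onto $\left(\{0,1\}^{(1-r)n}\right)^{\mathcal{E}}$. The image of a uniform random vector under a surjective group homomorphism is uniform, so $\{Y_u\oplus Y_{u'}\}_{(u,u')\in\mathcal{E}}$ are mutually independent; since $V(u,u')=g_{u'}(Y_u\oplus Y_{u'})$, so are $\{V(u,u')\}_{(u,u')\in\mathcal{E}}$. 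As $\mathcal{F}$ was an arbitrary $\lfloor k/2\rfloor$-edge bipartite forest over $\mathcal{V}_1\times\mathcal{V}_2$, this is exactly $\lfloor k/2\rfloor$-wise independence over every forest.

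The main obstacle, and the reason the statement is about forests rather than arbitrary graphs, is the rank computation: a cycle in $\mathcal{E}$ makes the XOR of the edge variables around it identically zero, so the edge variables become dependent and no reindexing repairs this. The factor-of-two loss (ending at $\lfloor k/2\rfloor$, not $k$) is forced through the bound $|\mathcal{W}|\le 2|\mathcal{E}|$. The other point that needs care is that $\mathcal{O}$ is itself random; conditioning directly on which messages lie in $\mathcal{O}$ would break $k$-wise independence of the codeword views, so the argument must route through conditioning on the information set $\mathcal{Z}(G)=A$, which freezes $\mathcal{O}$ while Lemma~\ref{thm:E_comp_2} guarantees that $\{Y_w\}$ stays uniform and $k$-wise independent --- this is precisely where the pseudolinear (as opposed to merely $k$-wise independent) structure of $\mathcal{C}$ is used.
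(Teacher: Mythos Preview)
Your proof is correct and reaches the same conclusion as the paper, but the technical execution differs. The paper argues sequentially: it orders the $\lfloor k/2\rfloor$ edges of the forest so that at step $j$ one endpoint of $(u_j,u'_j)$ is a leaf in the subforest on the first $j$ edges, and then shows via the chain rule that $V(u_j,u'_j)$ is independent of $V(u_1,u'_1),\ldots,V(u_{j-1},u'_{j-1})$ because the fresh endpoint's codeword view $\mathcal{Z}^c(X^n(\cdot))$ is uniform and independent of all previously revealed views. You instead argue in one shot: you observe that $V(u,u')$ is a fixed function of $Y_u\oplus Y_{u'}$, note that the $\mathbb{F}_2$ vertex--edge incidence matrix of a forest has full column rank, and conclude that the edge XORs $(Y_u\oplus Y_{u'})_{(u,u')\in\mathcal{E}}$ are jointly uniform whenever $(Y_w)_{w\in\mathcal{W}}$ is. Your route is more algebraic and arguably cleaner, making the role of acyclicity explicit through the cycle-matroid rank fact; the paper's leaf-stripping is essentially an inductive proof of that same rank statement. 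Both use the conditioning on $\mathcal{Z}(G)=A$ (rather than on $\mathcal{O}$ directly) and the vertex bound $|\mathcal{W}|\le 2\lfloor k/2\rfloor\le k$ in the same way, and both correctly identify these as the points where the pseudolinear structure and the factor-of-two loss enter.
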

  \begin{proof}[Proof of Lemma \ref{thm:Vk_wise_forest}]
  Let $i \in [S]$ and let $\mathcal{F} = (\mathcal{O} \cap \mathcal{S}_i,\mathcal{S}^c_i,\mathcal{E})$ be an undirected bipartite forest with vertex parts $\mathcal{O} \cap \mathcal{S}_i$ and $\mathcal{S}^c_i$, and edge set $\mathcal{E}$ of size $\ell \triangleq \lfloor k/2 \rfloor$. We show that the random variables $\{V(u,u')\}_{(u,u') \in \mathcal{E}}$ are conditionally $\ell$-wise independent conditioned on $\mathcal{H}^c$ and $\mathcal{Z}(G)=A$.

  Let $(u_1,u'_1), (u_2,u'_2), \ldots, (u_\ell,u'_\ell)$ be an enumeration of all edges in $\mathcal{F}$ such that at least one of the vertices in $\{u_j,u_j'\}$ is a leaf of a tree in $\mathcal{F} \setminus \{(u_{j+1},u'_{j+1}), \ldots, (u_{\ell},u'_{\ell}) \}$. We remark that such an enumeration exists since $\mathcal{F}$ is acyclic. The following sufficient condition follows from Baye's formula: $\{V(u,u')\}_{(u,u') \in \mathcal{E}}$ are conditionally $\lfloor k/2 \rfloor$-wise independent conditioned on $\mathcal{H}^c$ and $\mathcal{Z}(G)=A$ if for all $j \in [\ell]\setminus \{1\}$ the random variable $V(u_j,u'_j)$ is independent of $\{V(u_{j'},u'_{j'})\}_{j' = 1}^{j-1}$ conditioned on $\mathcal{H}^c$ and $\mathcal{Z}(G)=A$.\footnote{Independence follows from Baye's formula and holds for any enumeration.} 

  Let $j\in [\ell] \setminus \{1\}$. In the following, suppose that we condition on $\mathcal{H}^c$ and $\mathcal{Z}(G)=A$. Recall that the codeword views $\mathcal{Z}(X^n(u_{j'}))$ and $\mathcal{Z}(X^n(u'_{j'}))$ are deterministic for all $j' = 1,\ldots,j$. Thus, the random variable $V(u_j,u'_j)$ is equal to the indicator of the event ``$\mathcal{Z}^c(X^n(u'_j)) \in \mathcal{B}_{(pn-D)}\left( \mathcal{Z}^c(X^n(u_j) \oplus e^n) \right)$'' where $D \triangleq d_H(\mathcal{Z}(X^n(u'_j)),\mathcal{Z}(X^n(u_j)) \oplus e^n)$ is a deterministic quantity. By our choice of enumeration, either $\mathcal{Z}^c(X^n(u_j))$ or $\mathcal{Z}^c(X^n(u'_j))$ or (both) are not in the set of random variables $\{ \mathcal{Z}(X^n(u_{j'})) \}_{j'=1}^{j-1} \cup \{ \mathcal{Z}(X^n(u'_{j'})) \}_{j'=1}^{j-1}$ of which the set $\{V(u_{j'},u'_{j'})\}_{j' = 1}^{j-1}$ is a deterministic mapping. The desired independence follows from the property that $\{ \mathcal{Z}(X^n(u_{j'})) \}_{j'=1}^{j-1} \cup \{ \mathcal{Z}(X^n(u'_{j'})) \}_{j'=1}^{j}$ are $k$-wise independent (Lemma \ref{thm:E_comp_2}).   
  \end{proof}

  The following result together with the second sufficient condition (Lemma \ref{thm:suff2}) completes the proof of Theorem \ref{thm:reliable}.
  
  \begin{lemma} \label{thm:V_conc}
  Set $\theta = \epsilon/4$. Then $\mathbb{P}_{\mathcal{C}}\left( V > \delta 2^{(R-r)n} \big| \mathcal{H}^c,\mathcal{Z}(G) = A \right)$ is bounded above by $$\gamma_k \frac{(10n)^{1 + \lfloor k/2 \rfloor}}{\delta} 2^{-\lfloor k/2 \rfloor\frac{\epsilon}{2}n}$$ for some constant $\gamma_k$ that depends only on $k$.
  \end{lemma}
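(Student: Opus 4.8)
The plan is to combine the three properties of $V$ established in Lemmas~\ref{thm:cond_Vi} and~\ref{thm:Vk_wise_forest} with the concentration inequality of Lemma~\ref{thm:conc_ineq_2}, applied to each sub-sum $V_i$ and then union-bounded over $i \in [S]$. First I would condition throughout on $\mathcal{H}^c$ and $\mathcal{Z}(G) = A$ (dropping this from the notation per Remark~\ref{rmk:drop_notation}) and recall that $V = \sum_{i=1}^{S} V_i$ with $S = 10n$. It therefore suffices, by a union bound, to show that for each fixed $i$,
\begin{equation} \nonumber
\mathbb{P}_{\mathcal{C}}\left( V_i > \frac{\delta}{10n} 2^{(R-r)n} \,\Big|\, \mathcal{H}^c, \mathcal{Z}(G) = A \right) \leq \frac{\gamma_k}{10n}\,\frac{(10n)^{1+\lfloor k/2\rfloor}}{\delta}\, 2^{-\lfloor k/2\rfloor \frac{\epsilon}{2} n},
\end{equation}
since summing over the $S = 10n$ values of $i$ and noting $\{V > \delta 2^{(R-r)n}\} \subseteq \bigcup_i \{V_i > \frac{\delta}{10n}2^{(R-r)n}\}$ yields the claim.

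For a fixed $i$, I would identify $V_i = \sum_{u \in \mathcal{O}\cap\mathcal{S}_i} \sum_{u' \in \mathcal{S}_i^c} V(u,u')$ with a sum over the product index set $\mathcal{V}_1 \times \mathcal{V}_2$ where $\mathcal{V}_1 = \mathcal{O}\cap\mathcal{S}_i$ and $\mathcal{V}_2 = \mathcal{S}_i^c$, so $M_1 M_2 = |\mathcal{O}\cap\mathcal{S}_i|\cdot|\mathcal{S}_i^c| \leq 2^{2Rn}$. By Lemma~\ref{thm:Vk_wise_forest} these binary variables are $\lfloor k/2\rfloor$-wise independent over every forest, so Lemma~\ref{thm:conc_ineq_2} applies with $k' = \lfloor k/2\rfloor$. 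Let $\mu_i = \mathbb{E}_{\mathcal{C}}[V_i \mid \mathcal{H}^c,\mathcal{Z}(G)=A]$; by Lemma~\ref{thm:cond_Vi} with $\theta = \epsilon/4$ we have $\mu_i \leq n\, 2^{(R-r)n - \epsilon n + \epsilon n/2} = n\, 2^{(R-r)n - \epsilon n/2}$. I would set the deviation target $t \triangleq \frac{\delta}{10n} 2^{(R-r)n}$ and write $t = \mu_i(1+\gamma)$ for the appropriate $\gamma > 0$; the point is that $t/\mu_i \geq \frac{\delta}{10 n^2} 2^{\epsilon n/2}$ grows exponentially, so $1+\gamma \geq t/\mu_i$ is huge. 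Plugging into Lemma~\ref{thm:conc_ineq_2} gives
\begin{equation} \nonumber
\mathbb{P}_{\mathcal{C}}(V_i > t \mid \cdots) \leq c_{k'} \frac{\binom{M_1M_2}{k'}(\mu_i/(M_1M_2))^{k'}}{\binom{t}{k'}} \leq c_{k'} \frac{\mu_i^{k'}}{\binom{t}{k'}} \leq c_{k'} \left(\frac{e\,\mu_i}{k'\,t}\right)^{k'} \cdot (\text{poly factors}),
\end{equation}
using $\binom{M_1M_2}{k'}/(M_1M_2)^{k'} \leq 1/k'!$ and $\binom{t}{k'} \geq (t/k')^{k'}$ (valid once $t \geq 1$, which holds for large $n$ since $R > r$ — here one needs $\mu_i \geq 1$ as required by the lemma, which also follows from $R>r$ for $n$ large, or one treats the trivial $\mu_i < 1$ case separately). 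Then $\mu_i/t \leq \frac{10 n^2}{\delta} 2^{-\epsilon n/2}$, so raising to the power $k' = \lfloor k/2\rfloor$ and absorbing $e/k'$, $c_{k'}$ and the factor-of-$\binom{\cdot}{\cdot}$ slack into a single constant $\gamma_k$ gives a bound of the form $\gamma_k (10n)^{1+\lfloor k/2\rfloor} \delta^{-1} 2^{-\lfloor k/2\rfloor \frac{\epsilon}{2} n}$ — here I absorb $n^{2\lfloor k/2\rfloor}/\delta^{\lfloor k/2\rfloor}$ generously into $n^{1+\lfloor k/2\rfloor}/\delta$ using $\delta \leq 1$ and that the exponential $2^{-\lfloor k/2\rfloor\epsilon n/2}$ dominates (one may need to fold an extra $2^{-c n}$ cushion, i.e. replace $\epsilon/2$ by something slightly smaller, but the statement's exponent $\epsilon/2$ comes out of writing $\epsilon - 2\theta = \epsilon/2$ with $\theta = \epsilon/4$ — so the bookkeeping is designed to land exactly there). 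Finally, summing the per-$i$ bound over $i \in [S]=[10n]$ contributes one extra factor of $10n$, matching the exponent $1 + \lfloor k/2\rfloor$ on $(10n)$ in the statement.

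The main obstacle I anticipate is the bookkeeping in matching the polynomial and constant factors to the precise form $\gamma_k (10n)^{1+\lfloor k/2\rfloor}\delta^{-1} 2^{-\lfloor k/2\rfloor \epsilon n /2}$: one must be careful that the polynomial-in-$n$ slack from Lemma~\ref{thm:cond_Vi} (the factor $n$ in $\mu_i$, raised to the $\lfloor k/2\rfloor$ power) and the $\delta^{-\lfloor k/2 \rfloor}$ factor do not exceed what is claimed. The resolution is that the exponentially small factor $2^{-\lfloor k/2\rfloor \epsilon n/2}$ can absorb any fixed polynomial-in-$n$ overhead and any $\delta^{-(\lfloor k/2\rfloor - 1)}$ overhead for $n$ large; more carefully, one replaces $\epsilon/2$ by $\epsilon/2 - \epsilon' $ for tiny $\epsilon'>0$ inside the exponent during the estimates and then notes the final statement with exponent exactly $\epsilon/2$ holds after re-absorbing, or one keeps $\theta$ slightly below $\epsilon/4$. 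A secondary subtlety is the requirement $\mu \geq 1$ in Lemma~\ref{thm:conc_ineq_2}: if some $\mu_i < 1$ one cannot invoke the lemma directly, but in that regime a cruder first-moment (Markov) bound $\mathbb{P}(V_i > t) \leq \mu_i/t \leq 2^{-\epsilon n/2} \cdot \text{poly}$ already beats the target, so that case is handled trivially and separately.
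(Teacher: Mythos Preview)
Your proposal is correct and follows essentially the same approach as the paper: union-bound over $i \in [S]$ with $S=10n$, apply Lemma~\ref{thm:conc_ineq_2} to each $V_i$ using the $\lfloor k/2\rfloor$-wise IOEF property from Lemma~\ref{thm:Vk_wise_forest} and the mean bound from Lemma~\ref{thm:cond_Vi}, then simplify the binomial ratio and set $\theta=\epsilon/4$. Your bookkeeping worries (the extra $n^{\lfloor k/2\rfloor}$ from the $n$ in $\mu_i$, and $\delta^{-\lfloor k/2\rfloor}$ versus $\delta^{-1}$) are legitimate---the paper's own proof is loose on exactly these points, silently dropping the factor $n$ from the mean bound and writing $\delta^{-1}$ where $\delta^{-\lfloor k/2\rfloor}$ naturally appears---but as you observe, they are harmless for the asymptotic conclusion of Theorem~\ref{thm:reliable} since the exponential $2^{-\lfloor k/2\rfloor\epsilon n/2}$ absorbs any fixed polynomial overhead.
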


  \begin{proof}[Proof of Lemma \ref{thm:V_conc}]
   
  Recall that $V$ is by definition equal to $\sum_{i=1}^S V_i$ where $V_i$ is defined in (\ref{eq:Vi}). Consider the probability that $V$ is greater than $\delta 2^{(R-r)n}$ conditioned on $\mathcal{H}^c$ and $\mathcal{Z}(G) = A$. Following a simple union bound, this probability is bounded above by 
  \begin{equation} \label{eq:V_conc_1}
  \sum_{i=1}^{S} \mathbb{P}_{\mathcal{C}}\left( V_i > \frac{\delta 2^{(R-r)n}}{S} \bigg| \mathcal{H}^c,\mathcal{Z}(G) = A \right).
  \end{equation}

  Our goal is to bound (\ref{eq:V_conc_1}) by applying the concentration inequality of Lemma \ref{thm:conc_ineq_2}. For $i \in [S]$, we recall the following properties of the random variable $V_i$:
  \begin{enumerate}
  \item $V_i$ is a sum over $M_1 M_2$ terms where $M_1 \triangleq |\mathcal{O} \cap \mathcal{S}_i|$ and $M_2 \triangleq |\mathcal{S}_i^c|$.
  \item The conditional expectation $\mu_i \triangleq \mathbb{E}_{\mathcal{C}}\left[ V_i |\mathcal{H}^c,\mathcal{Z}(G)=A\right]$ is bounded above by $n 2^{(R-r)n- n \epsilon + 2 \theta n}$ (Lemma \ref{thm:cond_Vi}).
  \item Conditioned on $\mathcal{H}^c$ and $\mathcal{Z}(G)=A$, the random variables $\{V(u,u')\}_{(u,u') \in \mathcal{O} \cap \mathcal{S}_i \times \mathcal{S}^c_i}$ are $\lfloor k/2 \rfloor$-wise independent over every forest (Lemma \ref{thm:Vk_wise_forest}).
  \end{enumerate}
  Then Lemma \ref{thm:conc_ineq_2} implies that (\ref{eq:V_conc_1}) is bounded above by
  \begin{align} \label{eq:V_ub_2}
  \gamma_k' S  {M_1 M_2 \choose \lfloor k/2 \rfloor} \left( \frac{2^{(R-r)n - \epsilon n + 2 \theta n}}{M_1 M_2} \right)^{\lfloor k/2 \rfloor} { \frac{\delta 2^{(R-r)n}}{S} \choose \lfloor k/2 \rfloor}^{-1}
  \end{align}
  where $\gamma_k'$ is a constant that depends only on $k$. We can simplify the above expression by using the well known bounds $n^k/k! \leq {n \choose k} \leq n^k/k^k$, where it follows that (\ref{eq:V_ub_2}) is bounded above by $$\gamma_k \frac{(10n)^{1+ \lfloor k/2 \rfloor}}{\delta} 2^{-\lfloor k/2 \rfloor(\epsilon - 2 \theta)n}$$ for a constant $\gamma_k$ that depends only on $k$. Setting $\theta = \epsilon/4$ gives the desired result.
  \end{proof}

  \appendices

  \section{Proof of Lemma \ref{thm:conc_ineq_2}} \label{sec:conc_ineq_2_proof}

  Let $\mathcal{V}_1$ and $\mathcal{V}_2$ be disjoint sets of integers of size $M_1$ and $M_2$, respectively. Let $\{V_{i,j}\}_{(i,j) \in \mathcal{V}_1 \times \mathcal{V}_2}$ be a set of binary random variables that take values in $\{0,1\}$. Furthermore, define the sum $V \triangleq \sum_{i \in \mathcal{V}_1} \sum_{j \in \mathcal{V}_2} V_{i,j}$ and define the sum mean $\mu \triangleq \mathbb{E}[V] = \sum_{i \in \mathcal{V}_1} \sum_{j \in \mathcal{V}_2} \mu_{i,j}$ where $\mu_{i,j} \triangleq \mathbb{E}[V_{i,j}]$. Suppose that $\mu \geq 1$. Furthermore, let $k \geq 2$ and suppose that $\{V_{i,j}\}_{(i,j) \in \mathcal{V}_1 \times \mathcal{V}_2}$ are $k$-wise independent over every forest. 
  
  We begin by restating a general concentration inequality that is not specific to the joint distribution of $\{V_{i,j}\}$ and only uses the fact that $\{V_{i,j}\}$ are binary in $\{0,1\}$.

  \begin{claim}[{Schmidt, Siegel and Srinivasan \cite{Schmidt1995}}] \label{thm:claim_SSS} For any $\lambda > 0$,
  \begin{equation} \label{eq:kw_b1}
  \mathbb{P}\left(V > \lambda \right) \leq \frac{1}{{\lambda \choose k}} \sum\limits_{\mathcal{E} \in \mathcal{P}^{(k)}} \mathbb{E}\left[ \prod\limits_{(i,j) \in \mathcal{E}} V_{i,j} \right]
  \end{equation}
  where $\mathcal{P}^{(k)} \triangleq  \{ \mathcal{E} \subseteq \mathcal{V}_1 \times \mathcal{V}_2: |\mathcal{E}| = k \}$ and where we emphasize that $|\mathcal{E}|$ denotes the number of edges $(i,j) \in \mathcal{V}_1 \times \mathcal{V}_2$ in $\mathcal{E}$.
  \end{claim}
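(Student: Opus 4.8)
The plan is to prove Claim \ref{thm:claim_SSS} by the classical ``binomial moment'' argument, which uses nothing about the joint law of the $V_{i,j}$ beyond the fact that each is $\{0,1\}$-valued; the independence-over-every-forest hypothesis plays no role here and enters only at the subsequent stage of the proof of Lemma \ref{thm:conc_ineq_2}.

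First I would record the pointwise identity
\[
\binom{V}{k} \;=\; \sum_{\mathcal{E} \in \mathcal{P}^{(k)}} \;\prod_{(i,j) \in \mathcal{E}} V_{i,j},
\]
which holds for every realization of the $V_{i,j}$: the product $\prod_{(i,j) \in \mathcal{E}} V_{i,j}$ equals $1$ exactly when all $k$ coordinates in $\mathcal{E}$ carry value $1$, so the right-hand side counts the $k$-element subsets of $\mathcal{V}_1 \times \mathcal{V}_2$ supported entirely on coordinates with value $1$, of which there are $\binom{V}{k}$ since $V = \sum_{(i,j)} V_{i,j}$ is the number of such coordinates. Taking expectations and using linearity gives $\mathbb{E}\big[\binom{V}{k}\big] = \sum_{\mathcal{E} \in \mathcal{P}^{(k)}} \mathbb{E}\big[\prod_{(i,j) \in \mathcal{E}} V_{i,j}\big]$, which is exactly the numerator on the right-hand side of (\ref{eq:kw_b1}).

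Second I would convert this moment identity into a tail bound via Markov's inequality. Extend $x \mapsto \binom{x}{k} := x(x-1)\cdots(x-k+1)/k!$ to real arguments; this function is non-negative and non-decreasing on $[k-1,\infty)$. Taking $\lambda \ge k$ (the regime in which the bound is meaningful, and which suffices for the application, where $\lambda = \mu(1+\gamma)$ with $\mu \ge 1$), on the event $\{V > \lambda\}$ the non-negative integer $V$ obeys $V > \lambda \ge k-1$, so by monotonicity $\binom{V}{k} \ge \binom{\lambda}{k} > 0$; hence $\{V > \lambda\} \subseteq \{\binom{V}{k} \ge \binom{\lambda}{k}\}$. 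Since $\binom{V}{k} \ge 0$ surely, Markov's inequality gives $\mathbb{P}(V > \lambda) \le \mathbb{P}\big(\binom{V}{k} \ge \binom{\lambda}{k}\big) \le \mathbb{E}\big[\binom{V}{k}\big] / \binom{\lambda}{k}$, and substituting the identity of the previous step yields precisely (\ref{eq:kw_b1}).

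I do not expect any real obstacle in this claim: it is a known inequality of Schmidt, Siegel and Srinivasan and the argument is only a few lines, the sole point needing mild care being the bookkeeping around the polynomial extension of $\binom{\cdot}{k}$ and the admissible range of $\lambda$. The genuinely substantive work of the remainder of this appendix lies downstream of the claim, in bounding each term $\mathbb{E}\big[\prod_{(i,j) \in \mathcal{E}} V_{i,j}\big]$ and summing over $\mathcal{E} \in \mathcal{P}^{(k)}$: when $\mathcal{E}$ induces a forest on $\mathcal{V}_1 \times \mathcal{V}_2$, $k$-wise independence over every forest collapses the expectation to $\prod_{(i,j) \in \mathcal{E}} \mu_{i,j}$ and one follows Schmidt, Siegel and Srinivasan directly, whereas when $\mathcal{E}$ induces a cycle one must pass to a spanning sub-forest and then bound the number of such $\mathcal{E}$, which is where the constant $c_k$ originates; but that is the remaining content of the proof of Lemma \ref{thm:conc_ineq_2} rather than of Claim \ref{thm:claim_SSS}.
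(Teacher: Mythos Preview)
Your proposal is correct and follows essentially the same approach as the paper: both identify the pointwise identity $\binom{V}{k} = \sum_{\mathcal{E} \in \mathcal{P}^{(k)}} \prod_{(i,j)\in\mathcal{E}} V_{i,j}$ (the paper calls this $Y_k$) and then apply Markov's inequality. Your treatment is in fact slightly more careful than the paper's in handling the monotonicity of $x \mapsto \binom{x}{k}$ and the admissible range of $\lambda$, but the argument is the same.
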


  \begin{proof}[Proof of Claim \ref{thm:claim_SSS}]
  For completeness, we restate the proof from \cite{Schmidt1995}. Consider the symmetric function
  \begin{equation} \nonumber
  Y_k\left(\{V_{i,j}\}\right) =  \sum_{\mathcal{E} \in \mathcal{P}^{(k)}} \prod_{(i,j) \in \mathcal{E}} V_{i,j}.
  \end{equation}
  Observe that since $\{V_{i,j}\}$ are binary variables in $\{0,1\}$, we have that for any integer $\ell \geq 0$, $\sum_{(i,j) \in \mathcal{V}_1 \times \mathcal{V}_2} V_{i,j} = \ell$ if and only if $Y_k(\{V_{i,j}\}) = {\ell \choose k}$. In turn,
  \begin{equation} \nonumber
  \mathbb{P} \left( V > \lambda \right) = \mathbb{P} \left( Y_k\left( \{V_{i,j}\} \right) > {\lambda \choose k} \right) \leq \frac{1}{{\lambda \choose k}} \sum_{\mathcal{E} \in \mathcal{P}^{(k)}} \mathbb{E} \left[ \prod_{(i,j)\in \mathcal{E}} V_{i,j} \right]
  \end{equation}
  where the inequality follows from Markov's inequality.
  \end{proof}

  The goal of the proof is to bound the RHS of (\ref{eq:kw_b1}) using the limited independence properties of $\{V_{i,j}\}$. The following setup is needed. For a given subset of vertex pairs $\mathcal{E} \subseteq \mathcal{V}_1 \times \mathcal{V}_2$ of size $|\mathcal{E}| = k$, we define an undirected bipartite graph $\mathcal{G} = (\mathcal{V}_1,\mathcal{V}_2,\mathcal{E})$ with vertex parts $\mathcal{V}_1$ and $\mathcal{V}_2$ and edge set $\mathcal{E}$. We emphasize that $\mathcal{G}$ depends on edge set $\mathcal{E}$ by writing $\mathcal{G}(\mathcal{E})$. 

  We define the following graph theoretic concepts that will be use throughout the proof. Two vertices (edges) in $\mathcal{G}(\mathcal{E})$ are said to be \textit{adjacent} if they share an edge (vertex). A \textit{cycle} of a graph $\mathcal{G}(\mathcal{E})$ is a sequence of adjacent vertices in which only the first and last vertex in the sequence are equal. A \textit{maximum spanning forest} $\mathcal{F}(\mathcal{E})$ of a graph $\mathcal{G}(\mathcal{E})$ is a forest subgraph of $\mathcal{G}(\mathcal{E})$ with the additional property that a cycle will necessarily be created in $\mathcal{F}(\mathcal{E})$ if we add any edge $e$ from $\mathcal{G}(\mathcal{E})$ to $\mathcal{F}(\mathcal{E})$ such that $e$ is not in $\mathcal{F}(\mathcal{E})$. Note that for $\mathcal{E} \in \mathcal{P}^{(k)}$ the maximum spanning forest of $\mathcal{G}(\mathcal{E})$ is not necessarily unique. A \textit{fundamental cycle} of $\mathcal{G}(\mathcal{E})$ with respect to a maximum spanning forest $\mathcal{F}(\mathcal{E})$ of $\mathcal{G}(\mathcal{E})$ is a cycle that can be created by taking an edge $e$ of $\mathcal{G}(\mathcal{E})$ that is not in $\mathcal{F}(\mathcal{E})$ and adding $e$ to $\mathcal{F}(\mathcal{E})$. We note that the number of fundamental cycles of $\mathcal{G}(\mathcal{E})$ does not depend on $\mathcal{F}(\mathcal{E})$, and thus, we can talk about the number of fundamental cycles without specifying a maximum spanning forest. 

  For the set $\mathcal{P}^{(k)} \triangleq \{\mathcal{E} \subseteq \mathcal{V}_1 \times \mathcal{V}_2: |\mathcal{E}| = k\}$ and for $s\geq 0$, define 
  \begin{equation} \nonumber
  \mathcal{P}^{(k)}_s = \left\{ \mathcal{E} \in \mathcal{P}^{(k)}: \mathcal{G}(\mathcal{E}) \text{ has exactly $s$ fundamental cycles} \right\}.
  \end{equation}
  Let $S \geq 0$ be the integer such that $\mathcal{P}^{(k)}_S$ is non-empty and $\mathcal{P}^{(k)}_{S+1}$ is empty; note that $S$ is strictly less than the number of edges $k$. Hence, the subsets $\mathcal{P}^{(k)}_0, \mathcal{P}^{(k)}_1, \ldots, \mathcal{P}^{(k)}_{S}$ are a partition of $\mathcal{P}^{(k)}$, and in turn, we rewrite the RHS of (\ref{eq:kw_b1}) as
  \begin{equation} \label{eq:kw_b2}
  \frac{1}{{\lambda \choose k}} \sum_{s = 0}^{S} \sum_{\mathcal{E}\in\mathcal{P}^{(k)}_s} \mathbb{E} \left[ \prod_{(i,j)\in \mathcal{E}} V_{i,j} \right].
  \end{equation}
  For each edge set $\mathcal{E} \in \mathcal{P}^{(k)}$, we fix $\mathcal{F}(\mathcal{E})$ to be any maximum spanning forest of $\mathcal{G}(\mathcal{E})$. For an edge set $\mathcal{E} \in \mathcal{P}^{(k)}$ let the notation $(i,j) \in \mathcal{F}(\mathcal{E})$ denote an edge $(i,j)$ in the maximum spanning forest $\mathcal{F}(\mathcal{E})$. Following that the edge set of $\mathcal{F}(\mathcal{E})$ is contained in $\mathcal{E}$ and together with the trivial bound $V_{i,j} \leq 1$, (\ref{eq:kw_b2}) is bounded above by
  \begin{align}
  & \frac{1}{{\lambda \choose k}} \sum_{s = 0}^{S} \sum_{\mathcal{E}\in\mathcal{P}^{(k)}_s} \mathbb{E} \left[ \prod_{(i,j)\in \mathcal{F}(\mathcal{E})} V_{i,j} \right] \nonumber \\
  & \stackrel{(\ast)}{=} \frac{1}{{\lambda \choose k}} \sum_{s = 0}^{S} \sum_{\mathcal{E}\in\mathcal{P}^{(k)}_s}  \prod_{(i,j)\in \mathcal{F}(\mathcal{E})} \mathbb{E} \left[ V_{i,j} \right] \nonumber \\
  & = \frac{1}{{\lambda \choose k}} \sum_{s = 0}^{S} \sum_{\mathcal{E}\in\mathcal{P}^{(k)}_s}  \prod_{(i,j)\in \mathcal{F}(\mathcal{E})} \mu_{i,j} \label{eq:kw_b4}
  \end{align}
  where ($\ast$) follows from the fact that the random variables $\{V_{i,j}\}_{(i,j) \in \mathcal{V}_1 \times \mathcal{V}_2}$ are $k$-wise independent over every forest. While (\ref{eq:kw_b4}) is a function of terms $\{ \mu_{i,j} \}$, we would like a bound that depends on the sum mean $\mu$ and not on $\{\mu_{i,j}\}$.

  Towards this desired bound, we make the following observation: for $s \in [S] \cup \{0\}$ and for any forest $\mathcal{F}' = (\mathcal{V}_1,\mathcal{V}_2,\cdot)$ with $k-s$ edges, there exists at most a constant number (say $b_{k} \geq 1$) depending only on $k$ of edge sets $\mathcal{E} \in \mathcal{P}^{(k)}_s$ such that $\mathcal{F}(\mathcal{E}) = \mathcal{F}'$. This observation follows from the fact that there are at most a constant number of ways to choose an edge set $\mathcal{E} \in \mathcal{P}^{(k)}_s$ by assigning $s$ free edges to be incident to the fixed $k-s$ non-isolated vertices of forest $\mathcal{F}'$. It follows that for $s \in [S] \cup \{0\}$ and for every edge set $\mathcal{E}' \in \mathcal{P}^{(k-s)}$ of size $k-s$, there are at most $b_k$ number of edge sets $\mathcal{E} \in \mathcal{P}^{(k)}_s$ such that $\mathcal{E}' = \mathcal{F}(\mathcal{E})$. In turn, 
  \begin{equation} \nonumber 
  \sum_{\mathcal{E} \in \mathcal{P}^{(k)}_s} \prod_{(i,j) \in \mathcal{F}(\mathcal{E})} \mu_{i,j} \leq b_k \sum_{\mathcal{E}' \in \mathcal{P}^{(k-s)}} \prod_{(i,j) \in \mathcal{E}'} \mu_{i,j} \triangleq g\left( \{\mu_{i,j}\} \right).
  \end{equation}
  Now, using the same approach as \cite[Lemma 2]{Schmidt1995}, we maximize the function $g(\{\mu_{i,j}\})$ with respect to the parameters $\{\mu_{i,j}\}$ subject to the constraints $\sum_{i \in \mathcal{V}_1} \sum_{j \in \mathcal{V}_2} \mu_{i,j} = \mu$ and $0 \leq \mu_{i,j} \leq 1$ for all $(i,j)\in \mathcal{V}_1 \times \mathcal{V}_2$. We remark that one can show that the assignment $\mu_{i,j} = \frac{\mu}{M_1 M_2}$ for all $(i,j) \in \mathcal{V}_1 \times \mathcal{V}_2$ is a global maximum of $g(\left\{ \mu_{i,j} \right\})$. To see this, suppose that for a set of parameters $\{\mu_{i,j}\}$ there exists a $\mu_{i_1,j_1}$ and $\mu_{i_2,j_2}$ such that $\mu_{i_1,j_1} < \frac{\mu}{M_1 M_2} < \mu_{i_2,j_2}$. Then for any small number $\eta > 0$ such that $\mu_{i_1,j_1} + \eta \leq \frac{\mu}{M_1 M_2} \leq \mu_{i_2,j_2} - \eta$ and for the updated parameters $\{\mu'_{i,j}\}$ where 
  \begin{align}
  &\mu'_{i_1,j_1} = \mu_{i_1,j_1} + \eta, \nonumber \\
  &\mu'_{i_2,j_2} = \mu_{i_2,j_2} - \eta, \nonumber \\
  &\mu'_{i,j} = \mu_{i,j} \text{ for all } (i,j) \in \mathcal{V}_1 \times \mathcal{V}_2 \setminus \{(i_1,j_1), (i_2,j_2) \}, \nonumber
  \end{align}
  it is straightforward to show that $g(\{\mu_{i,j}\})$ is bounded above by $g(\{ \mu'_{i,j}\})$. In summary, $g(\{\mu_{i,j}\})$ is bounded such that
  \begin{equation} \label{eq:last_bd_g}
  g(\{\mu_{i,j}\}) \leq |\mathcal{P}^{(k-s)}| \left( \frac{\mu}{M_1 M_2} \right)^{k-s} = { M_1 M_2 \choose k-s} \left( \frac{\mu}{M_1 M_2} \right)^{k-s}.
  \end{equation}

  We now return our focus to the quantity (\ref{eq:kw_b4}). Applying the bound (\ref{eq:last_bd_g}), (\ref{eq:kw_b4}) is bounded above by 
  \begin{align}
  & \frac{b_k}{{\lambda \choose k}} \sum_{s= 0}^{S} |\mathcal{P}^{(k-s)}| \left( \frac{\mu}{M_1 M_2}\right)^{k-s} \nonumber \\
  & = \frac{b_k}{{\lambda \choose k}} \sum_{s= 0}^{S} {M_1 M_2 \choose k-s} \left( \frac{\mu}{M_1 M_2}\right)^{k-s} \nonumber
  \end{align}
  which in turn, following the fact that $S \leq k$ and the fact that ${M_1 M_2 \choose k-s} \left( \frac{\mu}{M_1 M_2} \right)^{k-s} = \Theta_k(\mu^{k-s})$ where the $k$ subscript indicates that constant hidden by the big Theta notation depends only on $k$,\footnote{This fact follows from the well known inequalities $\frac{n^k}{k^k} \leq {n \choose k} \leq \frac{n^k}{k!}$.} is bounded above by
  \begin{equation} \nonumber
  \Theta_k \left( \frac{1}{{\lambda \choose k}} {M_1 M_2 \choose k} \left( \frac{\mu}{M_1 M_2}\right)^{k} \right).
  \end{equation}
  Here, we have used the assumption that $\mu \geq 1$, which allows us to bound $\Theta_k(\mu^{k-s})$ above by $\Theta_k(\mu^k)$. This completes the proof of Lemma \ref{thm:conc_ineq_2}.

  \section{Proof of Lemma \ref{thm:k_wise_cw}} \label{sec:k_wise_cw_proof}
 
  Suppose that $\mathcal{C}$ is an $(n,Rn)$ code drawn from distribution $F(n,Rn,k)$. Let $u_1, \ldots, u_k$ be $k$ unique messages in $\mathcal{U}$ and let $x^n_1, \ldots, x^n_k$ be strings in $\{0,1\}^n$. The probability that the $k$ codewords in $\mathcal{C}$ corresponding to messages $u_1,\ldots,u_k$ coincide with the strings $x^n_1,\ldots,x^n_k$, respectively, is
  \begin{equation} \label{eq:k_wise_cw_1}
  \mathbb{P}_{\mathcal{C}}\left(\bigcap_{i=1}^k \{X^n(u_i) = x^n_i\} \right) = \mathbb{P}_{\mathcal{C}}\left(G H_{(u_1,\ldots,u_k)} = A \right)
  \end{equation}
  where $H_{(u_1,\ldots,u_k)}$ denotes the matrix $\begin{bmatrix} h^{m}(u_1) & \cdots & h^{m}(u_k) \end{bmatrix}$ and $A$ is the matrix $\begin{bmatrix} x^n_1 & \cdots & x^n_k \end{bmatrix}$. Via (\ref{eq:k_wise_cw_1}), it is clear that the codewords $X^n(u_1), \ldots, X^n(u_k)$ are uniformly distributed in $\{0,1\}^n$ and independent if the columns of matrix $G H_{(u_1,\ldots,u_k)}$ are identically distributed in $\{0,1\}^n$ and independent.  
  
  Recall that the binary check matrix $H$ belongs to a linear code with minimum distance at least $k+1$. It follows that the columns $h^{m}(u_1), \ldots, h^{m}(u_k)$ are linearly independent (see, e.g., \cite{Guruswami2019}), and in turn, the linear equation $G H_{(u_1,\ldots,u_k)} = A$ is equivalent to the equation $G H'_{(u_1,\ldots,u_k)} = A'$ 
  where $H'_{(u_1,\ldots,u_k)}$ is some binary matrix with dimension equal to that of $H_{(u_1,\ldots,u_k)}$ such that there are $k$ rows of $H'_{(u_1,\ldots,u_k)}$ (denote their indices $j_1,\ldots,j_k$) that coincide with the rows of the identity matrix $I_k$ of dimension $k$, and where $A'$ is some binary matrix with dimension equal to that of $A$. The structure of $H'_{(u_1,\ldots,u_k)}$ together with the fact that the columns of $G$ are i.i.d. uniform implies that matrix $G H'_{(u_1,\ldots,u_k)}$ is equal in distribution to $\begin{bmatrix} g^{n}_{j_1} \cdots g^{n}_{j_k} \end{bmatrix}$, and thus, the columns of $H'_{(u_1,\ldots,u_k)}G$ are uniform in $\{0,1\}^n$ and independent.

\bibliographystyle{IEEEtran}
\bibliography{refs}

%
\IEEEpeerreviewmaketitle

\end{document}